\documentclass{article}
\usepackage{geometry}
\usepackage[nospace,noadjust]{cite}
\usepackage{amsmath,amsthm,amssymb,amsfonts,amsbsy,times}
\usepackage{graphicx}
\usepackage{textcomp}
\usepackage{xcolor}
\usepackage{subfigure}
\usepackage{epsfig}
\usepackage{multirow}
\usepackage{algorithm}
\usepackage{algorithmicx}
\usepackage{algpseudocode}
\usepackage{array}
\usepackage{epstopdf}
\usepackage{color}
\usepackage{diagbox}
\usepackage{bm}
\usepackage{bbm}
\usepackage{mathrsfs}
\usepackage{tcolorbox}
\usepackage{pdfsync}
\usepackage{mathtools}

\usepackage{url}
\usepackage[hidelinks]{hyperref}
\usepackage{paralist}
\usepackage{comment}
\usepackage{multirow}
\usepackage[toc, page]{appendix}

\usepackage{fancyhdr}
\usepackage[capitalize]{cleveref}

\crefrangelabelformat{figure}{#3#1#4--#5#2#6}
\crefrangelabelformat{equation}{(#3#1#4--#5#2#6)}
\crefmultiformat{equation}{Eqs.~(#2#1#3}{,~#2#1#3)}{,~#2#1#3}{,~#2#1#3)}

\newtheorem{defi}{Definition}

\newtheorem{theorem}{Theorem}

\theoremstyle{remark}

\theoremstyle{problem}

\newcommand{\R}{\mathbb{R}}

\newcommand{\C}{\mathbb{C}}

\newcommand{\e}{\begin{equation}}
\newcommand{\ee}{\end{equation}}
\newcommand{\en}{\begin{equation*}}
\newcommand{\een}{\end{equation*}}
\newcommand{\eqn}{\begin{eqnarray}}
\newcommand{\eeqn}{\end{eqnarray}}
\newcommand{\bmat}{\begin{bmatrix}}
\newcommand{\emat}{\end{bmatrix}}

\DeclareMathAlphabet\mathbfcal{OMS}{cmsy}{b}{n}
\renewcommand{\P}[1]{\operatorname{\mathbb{P}}\left(#1\right)}

\newcommand{\imag}{\mathrm{i}}

\newcommand{\vct}[1]{\boldsymbol{#1}}
\newcommand{\mtx}[1]{\boldsymbol{#1}}

\newcommand{\<}{\langle}
\renewcommand{\>}{\rangle}

\newcommand{\trace}{\operatorname{trace}}

\newcommand{\rank}{\operatorname{rank}}

\newcommand{\set}[1]{\mathbb{#1}}

\DeclareMathOperator*{\argmin}{\text{arg~min}}

\newcommand{\wh}{\widehat}

\newcommand{\wt}{\widetilde}
\newcommand{\ol}{\overline}
\newcommand{\nqbit}{n}

\newcommand{\innerprod}[2]{\left\langle #1,  #2 \right\rangle}

\newcommand{\calA}{\mathcal{A}}

\newcommand{\calF}{\mathcal{F}}

\newcommand{\calN}{\mathcal{N}}

\newcommand{\calP}{\mathcal{P}}

\newcommand{\vb}{\vct{b}}

\newcommand{\vf}{\vct{f}}
\newcommand{\vg}{\vct{g}}
\newcommand{\vh}{\vct{h}}

\newcommand{\vp}{\vct{p}}

\newcommand{\vu}{\vct{u}}

\newcommand{\vx}{\vct{x}}

\newcommand{\vpsi}{\vct{\psi}}

\newcommand{\veta}{\vct{\eta}}

\newcommand{\vrho}{\vct{\rho}}
\newcommand{\vsigma}{\vct{\sigma}}

\newcommand{\mA}{\mtx{A}}
\newcommand{\mB}{\mtx{B}}
\newcommand{\mC}{\mtx{C}}

\newcommand{\mF}{\mtx{F}}

\newcommand{\mH}{\mtx{H}}

\newcommand{\mK}{\mtx{K}}

\newcommand{\mP}{\mtx{P}}
\newcommand{\mQ}{\mtx{Q}}
\newcommand{\mR}{\mtx{R}}

\newcommand{\mU}{\mtx{U}}
\newcommand{\mV}{\mtx{V}}
\newcommand{\mW}{\mtx{W}}
\newcommand{\mX}{\mtx{X}}

\newcommand{\mTheta}{\mtx{\Theta}}

\newcommand{\mId}{{\bf I}}

\newcommand{\setF}{\set{F}}

\newcommand{\setX}{\set{X}}

\graphicspath{{./figs/}}

\newlength{\imgwidth}
\newboolean{twoColVersion}
\setboolean{twoColVersion}{false}
\newcommand{\twoCol}[2]{\ifthenelse{\boolean{twoColVersion}} {#1} {#2} }

\definecolor{teal}{RGB}{26,157,150}

\usepackage{mathtools}

\title{\LARGE \bf Structured   Factorization Approaches \\ for     Quantum State Tomography}

\author{Zhen Qin, Joseph M. Lukens, Brian T. Kirby and Zhihui Zhu\thanks{ZQ (e-mail: zhenqin@umich.edu) is with the Michigan Institute for Computational Discovery and Engineering, Department of Electrical Engineering and Computer Science and Department of Statistics, University of Michigan, Ann Arbor, MI 48109 USA; JML (email: jlukens@purdue.edu) is with the Elmore Family School of Electrical and Computer Engineering and Purdue Quantum Science and Engineering Institute, Purdue University, West Lafayette, Indiana 47907, USA, and the Quantum Information Science Section, Oak Ridge National Laboratory, Oak Ridge, Tennessee 37831, USA; BTK (email: brian.t.kirby4.civ@army.mil) is with the DEVCOM Army Research Laboratory, Adelphi, MD 20783, USA and the Tulane University, New Orleans, LA 70118, USA; ZZ (email: zhu.3440@osu.edu) is with the Department of Computer Science and Engineering, The Ohio
State University, Columbus, Ohio 43210, USA.}}

\begin{document}

\maketitle

\begin{abstract}
Since the complexity of quantum state tomography (QST) scales exponentially with system size, exploiting priors such as low-rankness, tensor-network structures, and neural-network representations is essential for scalable QST in terms of sample complexity and parameter complexity. Existing approaches, however, either employ architecture-specific mechanisms to enforce physical validity for mixed states or adopt flexible structural parametrizations without automatically guaranteeing physical validity. In this paper, we introduce a unified framework, termed structured factorization, that builds on Burer–Monteiro-type factorization by parametrizing the density matrix as $\mF\mF^\dagger$, where the factor $\mF$ is constrained to belong to a structured model class. This factorization guarantees physical validity by construction while allowing a broad range of structural priors to be incorporated directly through the choice of the factor space, ranging from the generic Cholesky decomposition to low-rank matrices, matrix product operators, and neural density operators based on multilayer perceptron and transformer architectures. Building on this structured factorization framework, we formulate QST as an optimization problem over the factor space from measurement data. We first develop a unified statistical analysis of the sample complexity of least-squares estimation for a broad class of structured quantum states.
We then propose a projected gradient descent method that operates directly on the factor space and accommodates a wide range of structural parametrizations and reconstruction objectives. To further exploit the geometry of the maximum-likelihood estimation formulation and the constraints on the factors, we derive a power method that yields a step-size-free algorithm with fast convergence, recovering Cover’s method as a special case when the factor is unconstrained.  Numerical experiments demonstrate that the proposed framework, instantiated with low-rank models, tensor-network representations, and neural density operators, enables accurate and scalable QST across a wide range of structured settings.
\end{abstract}


\section{Introduction}

Quantum state tomography (QST) constitutes a cornerstone of quantum information processing and remains the most comprehensive benchmark for the characterization, verification, and validation of quantum devices~\cite{bertrand1987tomographic,vogel1989determination,leonhardt1995quantum,hradil1997quantum,James2001}.
For a composite system of $n$ qudits---each being a $d$-level quantum system, with qubits corresponding to the special case $d=2$---the underlying quantum state is fully described by a density matrix $\vrho \in \mathbb{C}^{d^{n} \times d^{n}}$.
Recovering $\vrho$ in practice necessitates performing quantum measurements on a large ensemble of identically prepared copies of the state.
In the absence of any structural prior, achieving a bounded reconstruction error, measured for instance by the Frobenius norm or trace norm between the reconstructed and true density matrices, provably requires at least $O(d^{2n})$ state copies under independent measurement schemes~\cite{haah2017sample}.
Such exponential sample complexity rapidly renders full QST infeasible as the system size increases, a regime common today with 
state-of-the-art quantum devices now exceeding one hundred qubits~\cite{preskill2018quantum,arute2019quantum,chow2021ibm}.

Driven by the need to bypass the exponential bottleneck of standard QST, a substantial body of work has focused on exploiting low-dimensional structures that intrinsically arise in physically relevant quantum systems.
Among the various structural hypotheses that have been proposed, two paradigms have emerged as particularly influential: low-rank density matrix models and tensor-network representations, the latter most notably exemplified by matrix product operators (MPOs). Low-rankness naturally arises in quantum systems prepared in pure or nearly pure states~\cite{flammia2012quantum,voroninski2013quantum,haah2017sample,kueng2017low,guctua2020fast,francca2021fast,qin2026optimal}; 
specifically, when the density matrix $\vrho$ has rank $r$, the total number of copies required for accurate reconstruction can be reduced to $O(d^{n} r)$, yielding an exponential improvement over unstructured tomography. Despite this dramatic reduction, low-rank assumptions alone remain insufficient in the regime of contemporary large-scale quantum processors. Tensor-network representations, by contrast, offer a qualitatively different route to scalability by explicitly encoding locality and correlation structure, 
efficiently capturing quantum states arising in low-temperature thermal systems~\cite{hartmann2004existence}, one-dimensional spatial systems~\cite{eisert2010}, Hamiltonians with decaying long-range interactions~\cite{pirvu2010matrix}, phase and states generated by noisy intermediate-scale quantum devices~\cite{noh2020efficient}.
A growing line of recent work~\cite{qin2024quantum,qin2024sample,qin2025enhancing,tang2025sketch,votto2025learning,qin2026quantum} demonstrates that the total number of state copies required for accurate reconstruction of tensor-network states can be reduced to polynomial scaling $O(\textup{poly}(n))$ while still guaranteeing bounded recovery error. For a comprehensive review of sample complexity guarantees and theoretical developments in structured QST under compressive measurements, we refer the reader to~\cite{qin2026statistical}.

Beyond tensor-network approaches, neural quantum states and neural density operators (NDOs) have recently emerged as a highly expressive alternative, in which classical neural networks are embedded directly into the parametrization of quantum states. In contrast to tensor-network states, where the representation is defined by a fixed network topology with local tensors obeying specific algebraic structures, neural quantum states adopt a globally shared parametrization in which all amplitudes are generated by the same neural network architecture. While this shared structure enables greater expressive flexibility and the ability to capture complex correlations, it typically comes at the cost of reduced interpretability and less explicit control over entanglement structure.
A broad spectrum of neural architectures has been explored in this context, including restricted Boltzmann machines~\cite{torlai2018neural,torlai2019integrating}, feedforward neural networks~\cite{cai2018approximating}, recurrent neural networks~\cite{morawetz2021u,iouchtchenko2023neural}, transformer-based models~\cite{cha2021attention,ma2023tomography}, convolutional neural networks \cite{fu2024lattice}, and variational autoencoders~\cite{rocchetto2018learning}.
While a rigorous theoretical understanding of the sample complexity required for reconstructing neural quantum states or NDOs remains largely elusive, these models are increasingly regarded as a next-generation representation capable of capturing quantum correlations beyond the expressive limits of conventional tensor-network formalisms.

Alongside ongoing efforts to develop efficient representations of quantum states, a complementary line of research focuses on efficiently determining unknown quantum states from measurement data.  Representative approaches include linear inversion~\cite{fano1957description}, maximum likelihood estimation (MLE)~\cite{hradil1997quantum,vrehavcek2001iterative,James2001}, approximate message passing~\cite{siekierski2025approximate}, cross approximation~\cite{lidiak2022quantum}, Bayesian and region-based inference techniques~\cite{blume2010optimal,granade2016practical,lukens2020practical,blume2012robust,faist2016practical}, as well as least-squares estimation (LSE) and classical machine learning–based schemes~\cite{lohani2020machine,brandao2020fast,zhu2024connection}.
Most of these approaches formulate state reconstruction as an optimization problem over a parametrized family of quantum states. Consequently, the choice of state representation determines not only the expressive power of the model class, but also the computational properties of the resulting estimation problem. For mixed-state tomography, a fundamental challenge is to simultaneously achieve expressive representations and maintain physical validity---i.e., positive semidefiniteness and unit trace. However, many existing representations, including direct density-matrix parametrizations as well as structured models such as MPOs and NDOs, do not automatically enforce these constraints. As a result, physical validity must either be imposed explicitly during optimization or enforced through additional architectural restrictions, which can complicate both theoretical analysis and algorithm design.  Several approaches have been proposed to address this issue. For tensor networks, matrix product density operators (MPDOs)~\cite{verstraete2004matrix} impose structural constraints on the parametrization to ensure positivity by construction, but these constraints are generally sufficient rather than necessary, which may limit representational flexibility. For neural quantum states, mixed-state representations can be constructed through purification-based formulations equipped with autoregressive neural networks~\cite{zhao2024empirical}. Although 
closely related to the Burer--Monteiro factorization, their effective factor dimension is implicitly determined by the purification architecture and ancillary system rather than being explicitly controlled as a model parameter. In other words, low-rankness is not directly parametrized, making it difficult to explicitly enforce a prescribed rank constraint, in contrast to Burer--Monteiro formulations where the factor dimension serves as an explicit and controllable model parameter.

\begin{figure*}[!th]
  \centering
  \includegraphics[width=12.5cm]{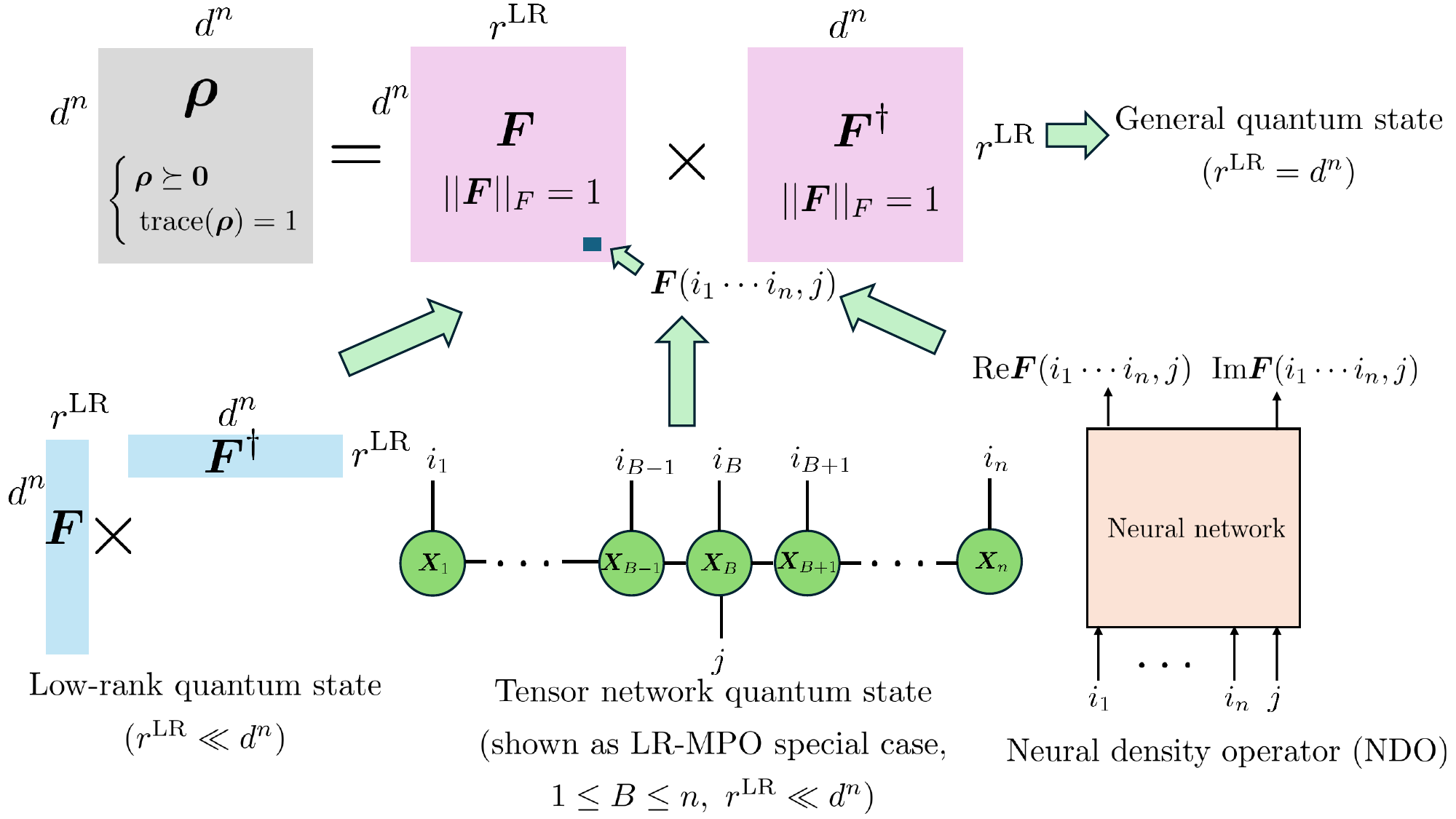}
  \caption{Summary of different physically compatible structured density matrix factorizations, where the definition of $i_1\cdots i_n$ can be found in Eq.~\eqref{the set of MPS states}. }
  \label{fig:summary_structured_states}
\end{figure*}

In this paper, we seek a representation that guarantees physical validity while remaining compatible with a broad range of structural priors, including low-rank, tensor-network, and neural-network parametrizations. Rather than enforcing positivity through architecture-specific constraints, we introduce a {\it physically compatible} parametrization based on the Burer–Monteiro factorization~\cite{burer2003nonlinear,burer2005local}, termed {\it structured factorization},
\begin{equation}
\label{eq:BMF}
    \vrho = \mF \mF^\dagger,
    \qquad
    \mF \in \setF\subset \{\wt\mF\in \C^{d^n \times r^{\textup{LR}}}: r^\textup{LR}\leq d^n,\|\wt\mF\|_F = 1 \},
\end{equation}
which guarantees physical validity by construction. Here $\|\mF\|_F^2 = \trace(\mF^\dagger \mF)$ denotes the Frobenius norm, and $\setF$ is a structured model class that encodes prior assumptions. Depending on the application, $\setF$ may correspond to an unconstrained matrix class, a tensor-network parametrization, or a neural-network-based representation.

We emphasize that the factorization in Eq.~\eqref{eq:BMF} itself is not new.
When $r^{\textup{LR}} = d^n$
and $\mF$ is restricted to be lower triangular with nonnegative diagonal entries, Eq.~\eqref{eq:BMF} reduces to the classical Cholesky factorization \cite{watkins2004fundamentals}, which can represent any positive semidefinite density matrix.
More generally, when $\setF$ is chosen as the class of all matrices in $\C^{d^n\times r^{\textup{LR}}}$, Eq.~\eqref{eq:BMF} reduces to the classical Burer–Monteiro factorization, which has been widely studied in low-rank matrix recovery and low-rank QST \cite{zhu2018global,chi2019nonconvex,wang2024efficient,hsu2024quantum,cai2025online,aditi2025rigorous,qin2026optimal}. The key observation of this work is that the same physically compatible factorization provides a unified framework for incorporating a much broader range of structural assumptions. Rather than viewing Eq.~\eqref{eq:BMF} solely as a low-rank parametrization, we treat it as a physically compatible outer layer that guarantees positivity and trace normalization, while additional structures are encoded through the choice of $\setF$. Building on this perspective, we develop structured parametrizations tailored to low-rank quantum states, including a low-rank matrix product operator (LR-MPO) representation, as well as neural density operator models based on multilayer perceptrons (MLPs) and transformer architectures. Importantly, unlike existing computational frameworks, the proposed approach integrates these structural constraints directly into the parametrization of $\mF$, thereby producing physically valid structured quantum states without post-processing or additional projection steps. An overview of the considered structured state classes and their relationships is summarized in \cref{fig:summary_structured_states}.

We exploit these structured parametrizations for efficient QST by optimizing directly over the factor $\mF$, rather than the density matrix $\vrho$, in fitting to empirical measurements. The resulting optimization formulations are presented later in Eq.~\eqref{General loss function in the FA QST two cases}. Building on this unified structured Burer–Monteiro framework, our technical contributions span both statistical analysis and algorithm design. Our first technical contribution is a unified sample-complexity analysis covering a broad family of structured quantum state classes, including general mixed states, low-rank states, matrix product states (MPSs) for pure states, and the proposed LR-MPO model for mixed states. 
Specifically, we show that, under suitable conditions on the measurement operators, LSE over the factor space achieves near-optimal sample complexity under the Frobenius norm. While a comparable theoretical characterization for MLE remains challenging, we establish that MLE exhibits asymptotically equivalent performance to LSE in the large-sample regime. Moreover, our numerical experiments demonstrate that MLE often provides superior reconstruction accuracy in practical finite-sample settings.

Our second technical contribution is the development of a unified optimization framework tailored to both the structured Burer–Monteiro parametrization and the associated estimation objectives. We first propose a projected gradient descent (PGD) method operating directly on the factor space $\mF\in\setF$, which naturally preserves the normalization constraint and enables efficient optimization across a wide range of structured state classes. 
The proposed PGD framework is agnostic to the specific reconstruction objective and can be applied to a broad class of loss functions, including both LSE and MLE formulations.
Although PGD provides a flexible and broadly applicable optimization strategy, its performance can be sensitive to the choice of step size. To further exploit the structure of the MLE objective, which is widely used in QST due to its statistical optimality and strong empirical performance, we develop a specialized optimization method based on the geometry induced by the Burer–Monteiro parametrization. In particular, the unit-sphere constraint on the factors ($\|\mF\|_F = 1$), together with a multiplicative operator structure naturally arising from the MLE objective, gives rise to a power-method (PM)-type update. Leveraging this observation, we develop an MLE-specific PM that avoids explicit step-size tuning while maintaining fast convergence for a broad class of structured quantum states, including general, low-rank, and tensor-network states for which a canonical factorization is available. 
Extensive numerical experiments validate the effectiveness of the proposed framework and algorithms across a diverse collection of structured QST tasks.

\section{Structure-Preserving Factorization Approaches}
\label{Sec: factorized structured representations}

\subsection{Physical quantum states}

Any valid quantum state must be positive semidefinite and have unit trace---i.e., $\{\vrho\in\C^{d^n\times d^n}: \vrho\succeq {\bm 0}, \trace(\vrho)=1  \}$---but enforcing these structural requirements is often time-consuming in large-scale QST. A common strategy is to employ projection-based approaches~\cite{struchalin2021experimental,kokaew2024bootstrapping,qin2025enhancing}, which enforce positivity and unit trace by explicitly projecting intermediate estimates onto the feasible set. While conceptually straightforward, such methods suffer from two fundamental limitations in high dimensions: first, each projection step typically requires an eigenvalue decomposition or related spectral operation with exponential computational complexity $O(d^{3n} + d^n\log d^n)$; second, they do not naturally generalize to structured parametrizations, such as tensor-network or neural-network representations, making it difficult to incorporate additional structural priors.

These limitations motivate an alternative strategy: instead of repeatedly projecting onto the set of valid density matrices, we reparametrize the density operator in a form that \textit{automatically} satisfies the physical constraints. Such an approach not only eliminates the need for expensive projections but also provides a more principled geometric perspective on the estimation problem. Specifically, for general mixed states, we adopt Burer–Monteiro-type factorization~\cite{burer2003nonlinear,burer2005local}, representing the density matrix as $\vrho = \mF\mF^\dagger$ with $\mF \in \C^{d^n \times d^n}$ and $\|\mF\|_F = 1$, as in Eq.~\eqref{eq:BMF} where $r^{\textup{LR}} = d^n$.  Equivalently, this induces a characterization of the state space via the set of factors\footnote{We chose the label ``simplex'' for this set since the eigenvalues $\{\lambda_k\}$ of all physical states define a standard simplex, i.e., $\lambda_k\geq 0$ and $\sum_k\lambda_k =1$.}
\begin{equation}
    \label{simplex definition of physical quantum states}
    \setF_{\textup{simplex}} = \{ \mF\in\C^{d^n\times d^n}:  \|\mF\|_F = 1  \}.
\end{equation}
Different structured state classes correspond to different choices of the factor class $\setF$. For example, the Cholesky factorization~\cite{watkins2004fundamentals} can be interpreted as a special case obtained by imposing an additional lower-triangular constraint:
\begin{equation}
\label{The definition of Cholesky factorization}
\setF_{\textup{Cholesky}} =
\left\{
\mC\in\C^{d^n\times d^n}:
\|\mC\|_F=1, \mC(k,k)>0 \ \forall k, \
\mC(i,j)=0\ \forall i<j
\right\}.
\end{equation}
This constraint removes the non-uniqueness of the factorization and has been widely adopted in QST~\cite{guctua2012rank,czerwinski2022quantum,koutny2022neural,mohammadisiahroudi2025improvements}. However, Cholesky factorization is primarily a reparametrization of general mixed states and does not fundamentally reduce the representation complexity, as the number of free parameters remains on the order of $d^{2n}$. In the following sections, we describe low-dimensional structured models for the factor $\mF$, simultaneously preserving physical validity and exploiting additional structure in the underlying quantum state.

\subsection{Low-dimensional physical quantum states}

Among the various possibilities of internal state structure, two paradigms are particularly prevalent and operationally significant: (i) low-rank quantum states, characterized by a spectrally compressed density operator, and (ii) tensor-network states---especially those admitting MPS/MPO representations. A number of important state families naturally fall into these categories. For instance,  low-temperature thermal
states~\cite{hartmann2004existence} and phase states whose amplitudes encode low-degree Boolean functions~\cite{ji2018pseudorandom} exhibit effective low-rank structure or permit compact classical descriptions. Likewise, ground states of short-range Hamiltonians and states generated by local quantum dynamics within finite time~\cite{eisert2010} are well captured by MPS/MPO representations, reflecting their constrained entanglement structure and yielding polynomially scalable parametrizations.

\paragraph{Low-rank quantum state.}
When the target state is pure or nearly pure, the density operator exhibits low entropy and can be well-approximated by a low-rank matrix~\cite{kueng2017low,guctua2020fast,francca2021fast,voroninski2013quantum,haah2017sample}.
Formally, we define the set of low-rank quantum states with a rank of at most $r^{\textup{LR}}$ in the factor space as
\begin{eqnarray}
    \label{Definition of low-rank set}
    \setF_{\textup{LR}}= \Big\{ \mF\in\C^{d^n\times r^{\textup{LR}}}: r^\textup{LR}\ll d^n,  \ \|\mF\|_F=1  \Big\},
\end{eqnarray}
which becomes $\setF_{\textup{simplex}}$ when $r^{\textup{LR}} = d^n$.

\paragraph{Matrix product state (MPS).}
Many physically relevant quantum states exhibit additional local or quasi-local structure that cannot be captured solely by rank constraints. In high-dimensional many-body systems, the density operator often admits an efficient tensor-network representation. A particularly powerful form is the MPS, which corresponds to the pure (rank-one) density operator. This representation factorizes the exponentially large vector into a chain of low-order tensors with bounded bond dimension, enabling storage and computation that scale only polynomially in the number of subsystems. Concretely,  let $i_1\cdots i_\nqbit$   denote the row  index\footnote{Specifically, $i_1\cdots i_n$ represents the $(i_1+\sum_{\ell=2}^n d^{\ell-1}(i_\ell-1))$-th row.}, where $i_1,\ldots,i_\nqbit\in [d]$.
With this notation, the feasible set of MPS factors is given by
\begin{eqnarray}
    \label{the set of MPS states}
    \setF_{\textup{MPS}} &\!\!\!\!=\!\!\!\!& \Big\{ \vf\in\C^{d^n\times 1}:\  \|\vf\|_F=1, \vf(i_1 \cdots i_\nqbit) = \mX_1^{i_1}  \cdots \mX_\nqbit^{i_\nqbit}, \nonumber\\
    &\!\!\!\!\!\!\!\!& \mX_\ell^{i_\ell}\in\C^{r_{\ell-1}^{\textup{MPS}}\times r_\ell^{\textup{MPS}}}, \ell\in[n], r_0^{\textup{MPS}}=r_n^{\textup{MPS}}=1  \Big\},
\end{eqnarray}
which implicitly induces the rank-one density operator representation $\vrho = \vf \vf^\dagger$. Here $\{\mX_\ell^{i_\ell} \}_{\ell\in[n]}$ denote the tensor factors in the MPS representation, and $\{r_{\ell}^{\textup{MPS}}\}_{\ell\in[n]}$ are the associated bond dimensions.

\paragraph{Low-rank matrix product operator (LR-MPO).}
While MPS provides an efficient representation for pure quantum states, it does not apply to mixed states. 
A natural generalization is the MPO representation, which is specifically designed for mixed states and, more generally, for operators acting on many-body Hilbert spaces. However, the MPO structure does not naturally integrate with the factorization-based framework. Consequently, obtaining an exact solution within the physical MPO manifold remains challenging, as it is generally difficult to simultaneously satisfy the MPO structural constraints and the simplex constraints. Following~\cite{qin2025enhancing}, one may instead adopt an approximate projection strategy that combines a sequential singular value decomposition---commonly referred to as the tensor-train singular value decomposition (TT-SVD)~\cite{Oseledets11}---with a simplex projection~\cite{chen2011projection} to restore positive semidefiniteness and unit trace. While this two-stage procedure provides a practical means of enforcing physical feasibility, the simplex projection typically perturbs the tensor-network structure of the intermediate estimate. As a consequence, the resulting MPO may no longer strictly satisfy the prescribed bond dimensions.  An alternative is the MPDO~\cite{verstraete2004matrix}, which enforces positive semidefiniteness by construction through locally completely positive factorizations. However, this comes at the cost of increased optimization complexity, as the resulting parametrization is significantly less amenable to efficient learning. These limitations suggest a tradeoff between structural expressivity, physical feasibility, and optimization tractability, motivating the need for a more compact and optimization-friendly representation.

Inspired by the structure of MPS-based density operators and the aforementioned tradeoffs, we propose an LR-MPO representation. Many physically relevant quantum states---such as thermal (Gibbs) states or ground states of gapped local Hamiltonians---simultaneously admit low-rank structure and efficient MPO representations, motivating a unified model that jointly captures both low-rank structure and tensor-network locality. More concretely, we define the set of LR-MPO factors as
\begin{eqnarray}
    \label{the set of LR-MPO states}
    \setF_{\textup{LR-MPO}} &\!\!\!\!=\!\!\!\!& \Big\{ \mF\in\C^{d^n\times r^{\textup{LR}}}:r^\textup{LR}\ll d^n, \ \|\mF\|_F=1, \mF(i_1 \cdots i_\nqbit,j) = \mX_1^{i_1} \cdots \mX_{B}^{i_{B},j} \cdots \mX_\nqbit^{i_\nqbit}, j\in[r^{\textup{LR}}], \nonumber\\
    &\!\!\!\!\!\!\!\!&  \mX_\ell^{i_\ell}\in\C^{r_{\ell-1}^{\textup{LR-MPO}}\times r_\ell^{\textup{LR-MPO}}}, \ell\in[n]\setminus\{B\},  \mX_B^{i_B,j}\in\C^{r_{B-1}^{\textup{LR-MPO}}\times r_B^{\textup{LR-MPO}}},  r_0^{\textup{LR-MPO}}=r_n^{\textup{LR-MPO}}=1 \Big\}.
\end{eqnarray}
Here, $\{\mX_\ell^{i_\ell}\}_{\ell\in[n]\setminus\{B\}}$ together with $\mX_B^{i_B,j}$ denote the tensor factors of the LR-MPO representation, and $\{r_{\ell}^{\textup{LR-MPO}}\}_{\ell=0}^{n}$ are the associated bond dimensions. In contrast to the MPS representation, the additional index $j$ is incorporated into the local tensor $\mX_B^{i_B,j}$ at site $B$. When $r^{\textup{LR}}$ is large, the index $j$ may be decomposed into multiple auxiliary indices, resulting in a block-structured MPO; for notational simplicity, we adopt a single-index notation throughout this paper. The choice of the special site $B$ is arbitrary and does not affect the overall model class.

This LR-MPO family provides a flexible representation that interpolates between purely low-rank models and standard MPO models, allowing one to exploit both spectral compressibility and locality-induced tensor structure. Finally, we note that unlike the conventional MPDO, the LR-MPO expresses the density matrix as a low-rank factorization, where each column of $\mF$ can be viewed as an MPS.
Consequently, writing $\mF=\begin{bmatrix}\vf_1 & \cdots & \vf_{r^{\textup{LR}}}\end{bmatrix}$, each column $\vf_j$ corresponds to an MPS, and the induced density matrix admits the decomposition $\vrho=\sum_{j=1}^{r^{\textup{LR}}}\vf_j\vf_j^\dagger$. This implies that $\vrho = \mF\mF^\dagger$ can be interpreted as a summation of multiple MPSs that share certain tensor factors.

\paragraph{Neural density operator (NDO).}
After discussing intrinsic low-dimensional structures---such as low-rankness, MPS, and LR-MPO---that arise in many-body quantum systems, it is natural to ask whether modern machine-learning-based approaches can exploit similar structural priors. Neural quantum states have demonstrated remarkable expressive power in representing high-dimensional quantum objects, and a comprehensive survey can be found in~\cite{lange2024architectures}. However, most existing architectures are fundamentally tailored to pure (rank-1) states, and extending them to genuinely mixed states, i.e., via NDOs, requires additional modeling choices that introduce significant limitations.

One line of work, exemplified by the Liouville density machine~\cite{kothe2024liouville}, directly parametrizes the density-ket  in Liouville space, i.e., the vectorized form of the density matrix, which in practice amounts to applying an unconstrained neural network directly to a vectorized density matrix. Although this representation is expressive and can capture a wide range of physically relevant states, it does not automatically ensure that the resulting density matrix is positive, and therefore cannot guarantee that the parametrization remains within the physical constraint set. An alternative line of work is purification-based modeling~\cite{torlai2018latent,nomura2021purifying,mellak2024deep}, in which a mixed state is represented as the marginal of a larger pure state supported on an extended Hilbert space. Concretely, auxiliary (ancilla) degrees of freedom are introduced and encoded within the hidden layers of a neural network, which parametrizes a purified wavefunction $\vpsi_a(i)$  on the composite system. The target density operator is then obtained by tracing out the auxiliary subsystem.
Algebraically, this construction induces a positive semidefinite factorization of the density matrix,
\begin{equation}
\vrho(i,j)
\;=\;
\sum_{a} \vpsi_a(i)\,\vpsi_a^*(j)
\;=\;
\sum_{a} \mH(i,a)\mH^*(j,a),
\end{equation}
where the summation index $a$ corresponds to auxiliary degrees of freedom, and the associated factor matrix $\mH$ is implicitly determined by the underlying purification architecture. While this purification-based construction is closely related to low-rank factorizations of the form  $\vrho = \mF \mF^\dagger$, the corresponding rank parameter is embedded in the architecture of the extended system rather than being directly exposed as an optimization variable in the original Hilbert space. This implicit parametrization requires optimization over an enlarged state space together with an auxiliary trace-out operation, introducing additional computational overhead. By contrast, our factorization approach $\vrho = \mF \mF^\dagger$ requires neither artificial rank-1 purification nor relaxation of physical constraints.  To extend this factorized representation to NDOs, we explore two complementary neural architectures. First, an MLP~\cite{cybenko1989approximation} provides the simplest fully connected neural mapping, serving as a baseline for direct function approximation. Second, a transformer network~\cite{vaswani2017attention} leverages self-attention mechanisms to capture complex correlations across qubits, enabling the modeling of nonlocal interactions and higher-order dependencies inherent in many-body quantum systems~\cite{cha2021attention,von2022self,shang2023solving,wu2023nnqs,luo2022autoregressive}. Together, these architectures illustrate a spectrum from straightforward to highly expressive neural parametrizations within the factorization framework. {The key computational building blocks of each network are provided in {Appendix}~\ref{Introduction of MLP and Transformer}}.

We set $\mF$ to have size $d^n \times r^{\textup{LR}}$ as in the sets $\setF_\textup{LR}$ and $\setF_\textup{LR-MPO}$, but now index $\mF$ by a vector of inputs $(i_1 \cdots i_n, j)$ into a neural network. Here, we define the set of NDOs as
\begin{eqnarray}
    \label{the set of NDO states}
    \setF_{\textup{NDO}} = \Big\{ \mF\in\C^{d^n\times r^{\textup{LR}}}:\ \|\mF\|_F=1, \mF \ \textup{represented by a neural network} \Big\}.
\end{eqnarray}
This setup allows a direct and flexible mapping from discrete indices to complex amplitudes, with the network outputting the real and imaginary parts. By formulating the factor in this manner, we can fully leverage the expressive power of modern neural networks while preserving a low-rank representation of the physical quantum state.

\section{Structure-Preserving Estimation Algorithms}
\label{sec: structure preserving factorization approach}

In this section, we introduce a general optimization framework for QST, which can accommodate a variety of reconstruction objectives.
To estimate $\vrho^\star$ of an unknown quantum system, one can perform measurements on a large ensemble of identically prepared copies. The most general class of physically realizable measurements is described by \emph{positive operator-valued measures} (POVMs)~\cite{nielsen2002quantum}, formalized as follows.

\begin{defi}
\label{Definition of POVM}
A POVM is a collection of positive semidefinite matrices $\{\mA_1,\ldots,\mA_K\}$ satisfying
\begin{eqnarray}
\label{The defi of POVM 1}
\sum_{k=1}^K \mA_k = \mId,
\end{eqnarray}
where $\mId$ is the identity matrix.
Each POVM element $\mA_k$ is associated with a possible outcome of a quantum measurement, and the probability $p_k$ of detecting the $k$-th outcome when measuring the density operator $\vrho$ is given by
\begin{eqnarray}
\label{The defi of POVM 2}
p_k = \innerprod{\mA_k}{\vrho} \equiv \trace(\mA_k\vrho),
\end{eqnarray}
where $\sum_{k=1}^Kp_k=1$ following from Eq.~\eqref{The defi of POVM 1} and $\trace(\vrho) = 1$. Repeating the measurement process $M$  times and taking the average of statistically independent outcomes generates the empirical frequencies
\begin{equation}
\wh p_{k} = \frac{f_k}{M}, \  k \in[K],
\label{eq:empirical-prob}\end{equation}
where $f_k$ denotes the number of times the $k$-th outcome is observed, and $[K]\equiv\{1,2,...,K\}$.
\end{defi}
An informationally complete POVM is sufficient to enable full reconstruction of any quantum state, yet  many commonly used POVMs---such as projective rank-one measurements---are not. In this latter case, one typically employs multiple distinct POVMs to characterize of $\vrho$. To formalize this setting, consider $Q$ POVMs indexed by $q \in [Q]$, where each POVM $\{\mA_{q,k}\}_{k\in[K]}$ consists of $K$ PSD operators and is performed with $M$ repetitions, producing empirical frequencies $\wh p_{q,1} \cdots \wh p_{q,K}$.
Assume that the ground truth state admits a structured factorization $\vrho^\star=\mF^\star {\mF^\star}^\dagger$ with $\mF^\star\in\setF$, where $\setF$ denotes a structured model class that encodes prior assumptions, as described in the previous section.
We formulate structured QST as the following optimization problem:
\begin{equation}
\label{General loss function in the FA QST two cases}
    \wh{\mF} =\argmin_{\mF \in \setF} g(\mF) = \begin{dcases}
    \argmin_{\mF \in \setF} \frac{1}{2Q} \sum_{q=1}^{Q}\sum_{k=1}^{K}(\<\mA_{q,k}, \mF\mF^\dagger  \> - \wh p_{q,k}  )^2, &\quad  \text{LSE} \\
    \argmin_{\mF \in \setF}\left( -\frac{1}{Q}\sum_{q=1}^Q \sum_{k=1}^{K}\wh p_{q,k}\log\< \mA_{q,k}, \mF\mF^\dagger\> \right), & \quad  \text{MLE}
    \end{dcases},
\end{equation}
where the objective function $g(\mF)$ specifies the reconstruction criterion. We explicitly consider two widely used choices, LSE and MLE.
LSE reconstructs the density matrix by minimizing the squared deviation between the predicted measurement probabilities and the observed empirical frequencies, offering a computationally tractable approach~\cite{kyrillidis2018provable,guctua2020fast,brandao2020fast,qin2024quantum,zhu2024connection,qin2024sample}. Nevertheless, LSE does not account for the statistical nature of quantum measurements, which are typically multinomial. MLE, on the other hand, provides a statistically principled approach by maximizing the likelihood of the observed measurement outcomes, equivalently minimizing the negative log-likelihood~\cite{hradil1997quantum,vrehavcek2001iterative,James2001}.

\subsection{Sample complexity}

We first formalize the sample complexity of structured QST in Eq. \eqref{General loss function in the FA QST two cases}. We draw upon $\epsilon$-net and covering number theory to
quantify the complexity of the state classes within $\setF$. As a starting point, define the normalized set $\calN = \left\{ \frac{\mF\mF^\dagger}{\|\mF\mF^\dagger\|_F} : \mF \in \setF \right\}$, consisting of all elements of $\{ \mF\mF^\dagger: \mF\in\setF \}$ rescaled to unit Frobenius norm.
A subset $\calN_\epsilon \subset \calN$ is called an $\epsilon$-net (or $\epsilon$-cover)
of $\calN$ if, for every $\frac{\mF\mF^\dagger}{\|\mF\mF^\dagger\|_F} \in \calN$,
there exists some $\frac{\mF'{\mF'}^\dagger}{\|\mF'{\mF'}^\dagger\|_F} \in \calN_\epsilon$
satisfying $\left\|
        \frac{\mF\mF^\dagger}{\|\mF\mF^\dagger\|_F}
        - \frac{\mF'{\mF'}^\dagger}{\|\mF'{\mF'}^\dagger\|_F}
    \right\|_F \leq \epsilon$.
Specifically, for a state class generated by the factorized model $\mF$, we define the covering number $N_\epsilon(\setF)$ as the minimum number of Frobenius-norm balls of radius $\epsilon$ required to cover the collection of density matrices $\{\mF\mF^\dagger:\mF\in\setF\}$. Covering numbers serve as a powerful tool for taming the complexity of large sets:
rather than analyzing every point in the uncountable set $\calN$ directly, one reduces
the problem to a finite collection $\calN_\epsilon$ via the union bound, with each
point in $\calN$ guaranteed to lie within $\epsilon$ of some representative in the cover.

Our analysis departs from the standard approach of bounding $N_\epsilon(\setF)$ directly; instead, we consider the difference class $\{
        \mF_1\mF_1^\dagger - \mF_2\mF_2^\dagger
        : \ \mF_1, \mF_2 \in \setF,\ \mF_1 \neq \mF_2
    \}$ and work with its covering number $N_\epsilon(\ol{\setF})$.
In many settings, $N_\epsilon(\ol{\setF})$ can be upper-bounded by $N_\epsilon^2(\setF)$,
and working with the difference class proves more convenient for the
analysis that follows.
The covering numbers for various classes of quantum states are detailed in Appendix~\ref{proof of recovery error bound} and can be summarized as follows:
\begin{eqnarray}
    \label{covering number for different quantum states1}
    \log N_\epsilon(\ol{\setF}) =  \begin{cases}
    O({d^{2n}}), &\quad  \setF_\textup{simplex}, \setF_\textup{Cholesky} \\
    O(d^{n}r^{\textup{LR}}), &\quad  \setF_\textup{LR}  \\
    O(d\log n\sum_{\ell=1}^n  r_{\ell-1}^{\textup{MPS}} r_\ell^{\textup{MPS}}), &\quad \setF_\textup{MPS}  \\
    O(d^2 \log n\sum_{\ell=1}^n (r_{\ell-1}^{\textup{MPO}})^2 (r_\ell^{\textup{MPO}})^2), &\quad \setF_\textup{LR-MPO}
    \end{cases}.
\end{eqnarray}
These estimates quantify the intrinsic complexity of different parametrizations and will serve as the key input for our subsequent analysis. In particular, they enable a unified characterization of the sample complexity, which will be formally established through the sufficient conditions presented in the following theorem.

\begin{theorem}
\label{tab:recovery_error_structured_POVM}
Suppose $Q$ POVMs $\{\mA_{q,1},\cdots, \mA_{q,K}  \}_{q\in[Q]}$ satisfy
\begin{gather}
    \label{requirements of second order information}
    \sum_{q=1}^{Q}\sum_{k=1}^{K}\<\mA_{q,k}, \vrho - \vrho^\star\>^2 \geq C_1(Q,K)\|\vrho - \vrho^\star\|_F^2,\\
    \label{requirements of third order information}
    \sum_{q=1}^{Q}\sum_{k=1}^{K}\<\mA_{q,k}, \vrho - \vrho^\star\>^2\<\mA_{q,k}, \vrho^\star  \> \leq C_2(Q,K)\|\vrho - \vrho^\star\|_F^2,
\end{gather}
for all $\vrho, \vrho^\star \in \{ \mF\mF^\dagger : \mF \in \setF \}$, where $\setF$ denotes the admissible set of structured quantum states. Using each POVM to measure the state $M$ times yields the empirical distribution $\wh p_{q,1} \cdots \wh p_{q,K}$. With the high probability $1-e^{-\Omega(\log N_\epsilon(\ol{\setF}))}$,  $\wh \vrho = \wh\mF\wh\mF^\dagger$ with $\wh\mF$ being the solution to the constrained LSE in Eq.~\eqref{General loss function in the FA QST two cases} then satisfies
\begin{eqnarray}
\label{final conclusion of recovery error for different states F norm}
\|\wh{\vrho} - \vrho^\star\|_F\leq O\left(\sqrt{\frac{C_2(Q,K)\log N_\epsilon(\ol{\setF})}{C_1^2(Q,K)M}}\right).
\end{eqnarray}
\end{theorem}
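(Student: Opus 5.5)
The argument is the standard basic-inequality plus uniform concentration approach for constrained least squares. Write $p_{q,k} = \<\mA_{q,k},\vrho^\star\>$ and $e_{q,k} = \wh p_{q,k} - p_{q,k}$. Since $\wh\mF$ minimizes the LSE objective and $\mF^\star\in\setF$ is feasible, we have $g(\wh\mF)\le g(\mF^\star)$. Expanding the squares gives the basic inequality
\begin{equation*}
\sum_{q=1}^{Q}\sum_{k=1}^{K}\<\mA_{q,k},\wh\vrho-\vrho^\star\>^2 \le 2\sum_{q=1}^{Q}\sum_{k=1}^{K} e_{q,k}\,\<\mA_{q,k},\wh\vrho-\vrho^\star\>.
\end{equation*}
By condition \eqref{requirements of second order information}, the left side is at least $C_1(Q,K)\|\wh\vrho-\vrho^\star\|_F^2$. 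Dividing by $\|\wh\vrho-\vrho^\star\|_F$, it remains to bound
\begin{equation*}
\sup_{\vrho}\ \sum_{q,k} e_{q,k}\,\<\mA_{q,k},\mtx{\Delta}\>, \qquad \mtx{\Delta} = \frac{\vrho-\vrho^\star}{\|\vrho-\vrho^\star\|_F},
\end{equation*}
uniformly over normalized elements of the difference class $\ol{\setF}$.

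For a fixed normalized $\mtx{\Delta}$, the term $Z = \sum_{q,k} e_{q,k}\<\mA_{q,k},\mtx{\Delta}\>$ is a sum over $q$ of independent, centered multinomial averages. Each $e_{q,k}$ is an average of $M$ independent indicator variables. Therefore $Z = \frac{1}{M}\sum_{q}\sum_{m=1}^M X_{q,m}$, where $X_{q,m} = \<\mA_{q,k_{q,m}},\mtx{\Delta}\> - \sum_k p_{q,k}\<\mA_{q,k},\mtx{\Delta}\>$ and $k_{q,m}$ is the observed outcome. Its variance is at most $\frac{1}{M}\sum_{q,k}p_{q,k}\<\mA_{q,k},\mtx{\Delta}\>^2 \le C_2(Q,K)/M$ by condition \eqref{requirements of third order information}. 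Each summand is bounded, since $|\<\mA_{q,k},\mtx{\Delta}\>|\le\|\mA_{q,k}\|_F\le 1$ when $\mA_{q,k}\preceq\mId$. Bernstein's inequality then gives $|Z|\lesssim \sqrt{C_2 t/M} + t/M$ with probability $1-2e^{-t}$. Take $t\asymp\log N_\epsilon(\ol{\setF})$ and a union bound over an $\epsilon$-net of the normalized difference class. In the regime of interest, where $M \gtrsim \log N_\epsilon/C_2$, the Bernstein term $t/M$ is dominated by the sub-Gaussian term. The result is $\sup_{\text{net}} |Z|\lesssim\sqrt{C_2\log N_\epsilon(\ol{\setF})/M}$ with probability $1-e^{-\Omega(\log N_\epsilon(\ol{\setF}))}$.

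The main obstacle is passing from the net to the full set, i.e., controlling the discretization error. For arbitrary $\mtx{\Delta}$ with net point $\mtx{\Delta}'$ satisfying $\|\mtx{\Delta}-\mtx{\Delta}'\|_F\le\epsilon$, we need $\mtx{\Delta}-\mtx{\Delta}'$ to be, after rescaling, again essentially in the normalized difference class, so that the supremum $S$ satisfies $S\le \sup_{\text{net}}|Z| + \epsilon' S$. This would give $S\le 2\sup_{\text{net}}|Z|$ for small $\epsilon$. If that closure property fails for a given $\setF$, I would fall back on a crude Lipschitz bound, $|Z(\mtx{\Delta})-Z(\mtx{\Delta}')|\le \sum_{q,k}|e_{q,k}|\,\|\mA_{q,k}\|_F\,\epsilon \le 2Q\epsilon$. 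I would then choose $\epsilon$ polynomially small, which only changes $\log N_\epsilon$ by logarithmic factors, consistent with the $\log n$ factors already in \eqref{covering number for different quantum states1}. A further subtlety is that the variance bound uses $\vrho^\star$ fixed while $\vrho$ ranges freely. Condition \eqref{requirements of third order information} is stated for all pairs, so it applies to each net element. Combining the pieces yields $C_1\|\wh\vrho-\vrho^\star\|_F\lesssim\sqrt{C_2\log N_\epsilon(\ol{\setF})/M}$, which is \eqref{final conclusion of recovery error for different states F norm}.
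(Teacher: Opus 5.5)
Your skeleton is the same as the paper's: the basic inequality $\|\calA(\wh\vrho-\vrho^\star)\|_2^2\le 2\langle\veta,\calA(\wh\vrho-\vrho^\star)\rangle$, the lower bound from \eqref{requirements of second order information}, a net over normalized differences, a per-point tail driven by \eqref{requirements of third order information}, and a union bound. As written, however, it does not establish the statement, for two reasons.

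First, the concentration step. Bernstein leaves a range term $bt/M$ with $b\ge\max_{q,k}|\langle\mA_{q,k},\mtx{\Delta}\rangle|$. You remove it by assuming $M\gtrsim\log N_\epsilon(\ol{\setF})/C_2(Q,K)$. This is not a hypothesis of the theorem, and it is far from harmless. With the Haar values $C_1=Q/d^n$, $C_2=O(Q/d^{2n})$, it reads $QM\gtrsim d^{2n}\log N_\epsilon(\ol{\setF})$. The claimed error $\sqrt{\log N_\epsilon(\ol{\setF})/(QM)}$ is already informative at $QM\gtrsim\log N_\epsilon(\ol{\setF})$, so you only recover the bound where it is below $d^{-n}$. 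Your range bound is also incorrect. $\mA_{q,k}\preceq\mId$ gives $\|\mA_{q,k}\|\le 1$ in operator norm, not $\|\mA_{q,k}\|_F\le 1$; a rank-$s$ projector has Frobenius norm $\sqrt{s}$. In general you only get $|\langle\mA_{q,k},\mtx{\Delta}\rangle|\le\|\mtx{\Delta}\|_1\le\sqrt{\rank(\mtx{\Delta})}$, which can be as large as $\sqrt{d^n}$ for $\setF_{\textup{simplex}}$. The two stated conditions control the variance but not this range. The paper's proof carries no condition on $M$ because it imports the per-point tail $\P{\sum_{q,k}\langle\veta_{q,k}\mA_{q,k},\vrho^{(p)}\rangle>t}\le 2e^{-Mt^2/(CC_2(Q,K))}$ directly from \cite[Lemma 14]{qin2024quantum}. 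To match the statement you need a concentration input of that sub-Gaussian strength (and should check what it assumes), or else the restriction on $M$ has to appear in the theorem.

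Second, the net-to-set step. Your main route is left conditional, and the fallback, besides reusing the false norm bound, costs more than the paper's argument. The missing idea is to net not the normalized difference set itself but a scale-invariant structured superset $\ol\setX$: unit-Frobenius trace-zero matrices of rank at most $2r^{\textup{LR}}$, or MPOs with doubled bond dimensions. This set is closed under the splitting you need. The difference $\vrho-\vrho^{(p)}$ is a sum of two members of the same set in the low-rank case, or $n$ members in the MPO case via \cite[Eq.~(90)]{qin2024quantum}, each of Frobenius norm at most $\epsilon$. Hence $S\le 2\epsilon S+\max_{\mathrm{net}}$ (respectively $S\le n\epsilon S+\max_{\mathrm{net}}$), and a fixed $\epsilon=1/4$ (respectively $\epsilon=1/(2n)$) suffices. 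This is exactly where the $\log n$ in \eqref{covering number for different quantum states1} comes from. The superset is needed anyway to bound the cardinality of your net, since normalization is not Lipschitz near $\vrho=\vrho^\star$. Once your Lipschitz fallback is corrected to $|\langle\mA_{q,k},\mtx{\Delta}-\mtx{\Delta}'\rangle|\le\|\mtx{\Delta}-\mtx{\Delta}'\|_1$, it requires $\epsilon\lesssim\sqrt{C_2\log N_\epsilon(\ol{\setF})/M}\,/\,(Q\sqrt{\rank(\mtx{\Delta}-\mtx{\Delta}')})$. With the Haar values this gives $\log(1/\epsilon)\approx n\log d+\log(QM)$, which replaces the $\log n$ for $\setF_{\textup{MPS}}$ and $\setF_{\textup{LR-MPO}}$ by a factor linear in $n$. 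Your internal net does have one genuine merit. Condition \eqref{requirements of third order information} is only assumed for actual differences $\vrho-\vrho^\star$ of states in the class, whereas the paper applies it at net points of $\ol\setX$ that need not have that form. Adopting the superset route therefore means assuming the third-order bound on all of $\ol\setX$, which does hold for design ensembles.
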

The proof is provided in {Appendix}~\ref{proof of recovery error bound}. Intuitively speaking, {Eq.~\eqref{requirements of second order information} guarantees a uniform lower bound on the energy of the error signal under the measurement ensemble, ensuring global identifiability of perturbations in the Frobenius geometry. \Cref{requirements of third order information} further controls higher-order interaction terms between the measurement operators and the ground-truth state, preventing local curvature effects from amplifying stochastic fluctuations. Together, these conditions ensure that the induced least-squares objective exhibits stable local geometry suitable for accurate recovery.}  The specific values of $C_1(Q,K)$ and $C_2(Q,K)$ depend on the measurement ensemble. For spherical $3$-designs~\cite{matthews2009distinguishability,kueng2017low}, they are given explicitly by $C_1(Q,K) = \frac{d^{n}}{K(d^n+1)}$ and $C_2(Q,K) = O\big(\frac{1}{K^2}\big)$. {In contrast, for unitary $3$-designs~\cite{kueng2015qubit,webb2016clifford,zhu2017multiqubit} and Haar-random projective measurements~\cite{qin2024quantum,qin2025enhancing}, Eqs.~\eqref{requirements of second order information} and \eqref{requirements of third order information} hold in expectation with respect to the measurement operators $\mA_{q,k}$}, with  $C_1(Q,K) = \frac{Q}{d^n}$ and $C_2(Q,K) = O\big(\frac{Q}{d^{2n}}\big)$  \cite[Eqs. (S35) and (S36)]{huang2020predicting}. While these expectation-based formulations only hold on average, they are sufficient to establish rigorous high-probability guarantees, since they capture the typical behavior of the measurement ensemble. Notably, although the specific values of $C_1(Q,K)$ and $C_2(Q,K)$ vary across measurement ensembles, their effects are absorbed into the constants of the recovery guarantee. Consequently, the dominant sample-complexity term is governed by the covering complexity of the difference class, $QM\gtrsim \log N_\epsilon(\ol{\setF})$.

According to \Cref{tab:recovery_error_structured_POVM}, the Frobenius-norm recovery error scales proportionally with the intrinsic degrees of freedom of the target quantum state across all considered structural models. By applying the inequality $\|\wh{\vrho} - \vrho^\star\|_1 \leq 2\sqrt{\rank(\vrho^\star)}\|\wh{\vrho} - \vrho^\star\|_F$~\cite{coles2019strong} followed by the Fuchs--van de Graaf inequality $1 - \sqrt{\calF(\wh{\vrho},\vrho^\star)}\leq \frac{1}{2}\|\wh{\vrho} - \vrho^\star\|_1$, we can extend the Frobenius-norm recovery guarantees to corresponding bounds in trace norm and fidelity as follows:
\begin{eqnarray}
\label{final conclusion of recovery error for different states trace norm}
\|\wh{\vrho} - \vrho^\star\|_1 \leq  O\left(\sqrt{\frac{C_2(Q,K)\rank(\vrho^\star)\log N_\epsilon(\ol{\setF})}{C_1^2(Q,K)M}}\right),
\end{eqnarray}
and
\begin{eqnarray}
\label{final conclusion of recovery error for different states fidility}
\calF(\wh{\vrho},\vrho^\star)  \geq    \left[1 - O\left(\sqrt{\frac{C_2(Q,K)\rank(\vrho^\star)\log N_\epsilon(\ol{\setF})}{C_1^2(Q,K)M}}\right) \right]^2.
\end{eqnarray}
Here, the effective rank parameter $\rank(\vrho^\star)$ is specified according to the underlying state structure as
\begin{eqnarray}
\label{rank choice for different states}
\rank(\vrho^\star) =  \begin{cases}
    d^n, &\quad   \mF^\star\in\setF_\textup{simplex}, \setF_\textup{Cholesky} \\
    1, &\quad  \mF^\star\in\setF_\textup{MPS}  \\
    r^{\textup{LR}}, &\quad  \mF^\star\in\setF_\textup{LR},\setF_\textup{LR-MPO}  \\
    \end{cases}.
\end{eqnarray}
Notably, both the trace-norm error and fidelity bounds incur an additional dependence on the effective rank of the ground-truth state $\vrho^\star$, inherited from the norm conversion step.

We emphasize that the present theoretical analysis is tailored to structured matrix-based quantum state models and does not directly extend to NDOs. {This is because different neural network architectures, choices of parameters, and activation functions lead to distinct covering numbers, preventing straightforward generalization.} Nevertheless, we will empirically evaluate the performance of neural parametrizations in simulation to assess their practical behavior. Finally, we emphasize that the above recovery guarantees are derived for constrained LSE. From a statistical perspective, when the number of samples $QM$ is sufficiently large, the negative log-likelihood admits a local quadratic approximation around the true state, under which MLE behaves similarly to LSE and so achieves comparable asymptotic behavior. Outside this regime,  such an approximation no longer holds, and the LSE-based analysis does not directly apply to MLE. Nevertheless, the simulation results presented in \Cref{Sec:simualtion results} indicate that MLE attains lower reconstruction error in practice.

\subsection{Optimization methods}
\subsubsection{Projected gradient descent (PGD)} \label{sec:PGD}
To minimize the loss in Eq.~\eqref{General loss function in the FA QST two cases}, a natural approach is to use PGD on the factorized variable $\mF$, with updates carried out directly in the factor space:
\begin{eqnarray}
    \label{Unified method GD}
 \mF_{t+1} = \calP_{\setF}(\mF_t -   \mu \nabla_{\mF} g(\mF_{t})),
    \end{eqnarray}
where $\calP_{\setF}$ denotes the projection onto the set $\setF$,
$\mu$ is the step size, and $\nabla_{\mF} g(\mF_t)$ denotes the \emph{(Wirtinger) gradient}\footnote{For a comprehensive treatment of Wirtinger calculus and complex matrix analysis, we refer the reader to~\cite[Chapter~3]{zhang2017matrix}.} of the objective function with respect to the complex-valued matrix variable $\mF$. The corresponding expressions for the LSE and MLE losses are given by
 \begin{eqnarray}
    \label{Unified representation of WPGD LSE}
    \nabla_{\mF} g(\mF_t) =
    \begin{dcases}
    \frac{1}{Q}\sum_{q=1}^{Q}\sum_{k=1}^{K}(\<\mA_{q,k}, \mF_t\mF_t^\dagger  \> - \wh p_{q,k}  )\mA_{q,k}\mF_t, &\quad  \text{LSE} \\
     -\frac{1}{Q}\sum_{q=1}^{Q}\sum_{k=1}^{K}\frac{\wh p_{q,k}\mA_{q,k}\mF_t }{\< \mA_{q,k}, \mF_t\mF_t^\dagger\>}, &\quad  \text{MLE}
    \end{dcases}.
    \end{eqnarray}
We note that compared with the LSE loss, the MLE loss can potentially reduce computational cost by exploiting the sparsity of the empirical frequency vector $\wh\vp$. Specifically, whenever some entries of $\{\wh p_{q,k}\}$ are zero---an effect most pronounced when the number of shots per POVM $M$ is much smaller than $d^n$---the corresponding gradient terms $\frac{\wh p_{q,k}\mA_{q,k}\mF_t }{\< \mA_{q,k}, \mF_t\mF_t^\dagger\>}$ vanish and therefore need not be evaluated.

We now characterize the projection operator $\calP_{\setF}$. For low-rank states, the set \(\setF_{\textup{LR}}\) forms a (complex) sphere embedded in \(\C^{d^n \times r^{\textup{LR}}}\), i.e., a smooth manifold defined by a single norm constraint. This simple geometry makes the projection particularly efficient, as it reduces to a normalization step, namely
   \begin{eqnarray}
    \label{Unified method in the FA QST}
    \calP_{\setF_{\textup{LR}}}(\mB)  = \frac{\mB}{\|\mB\|_F},
    \end{eqnarray}
where $\mB\in\C^{d^n \times r^{\textup{LR}}}$. The corresponding per-iteration computational complexities of PGD are
$O(d^{2n} Q K + d^{2n} r^{\textup{LR}})$ and
$O(d^{2n}\textup{NNZ}(\wh\vp) + d^{2n}r^{\textup{LR}})$
for LSE and MLE, respectively, where $\textup{NNZ}(\wh\vp)$ denotes the number of nonzero elements in $\wh\vp$. Since $\setF_{\textup{simplex}}$ corresponds to the full-rank case $r^{\textup{LR}} = d^n$, the projection $\calP_{\setF_{\textup{simplex}}}(\mB)$ coincides with Eq.~\eqref{Unified method in the FA QST}. {For the Cholesky  factorization in Eq.~\eqref{The definition of Cholesky factorization}, one may restrict $\mB$ to the space of lower-triangular matrices, in which case the projection becomes $\calP_{\setF_{\textup{Cholesky}}}(\mB)  = \frac{\textup{Trun}(\mB)}{\|\textup{Trun}(\mB)\|_F},$
where $\textup{Trun}(\cdot)$ zeros-out upper-triangular entries and enforces all diagonal entries to be positive.}

As in the tensor-train decomposition, there is no efficient algorithm for computing the exact projection onto the set $\setF_{\textup{LR-MPO}}$. Instead, we employ the TT-SVD procedure~\cite{Oseledets11}, denoted by $\textup{SVD}^{tt}(\cdot)$, which provides an efficient quasi-optimal projection:
    \begin{eqnarray}
    \label{the normalization step in RPGD LR MPO LSE}
    \calP_{\setF_{\textup{LR-MPO}}}(\mB) = \frac{\textup{SVD}^{tt}(\mB)}{\|\textup{SVD}^{tt}(\mB)\|_F}.
    \end{eqnarray}
Unlike the TT-SVD combined with a simplex projection used in~\cite{qin2025enhancing}, which cannot exactly preserve the bond dimensions, our normalization step guarantees that the bond dimensions remain fixed after the TT-SVD.
From a computational perspective, the per-iteration cost of PGD under the LR-MPO parametrization is
$O(d^{2n} Q K + d^{2n} r^{\textup{LR}} + d^{2n} (\max_\ell r_\ell^{\textup{LR-MPO}})^2)$ for LSE, and
$O(d^{2n}\mathrm{NNZ}(\wh \vp) + d^{2n} r^{\textup{LR}} + d^{2n} (\max_\ell r_\ell^{\textup{LR-MPO}})^2)$ for MLE,
where the last term in both expressions arises from the TT-SVD truncation. Because $\setF_\textup{MPS}$ can be viewed as a special case of $\setF_\textup{MPO}$ with $r^{\textup{LR}}=1$, the projection operator $\calP_{\setF_{\textup{MPS}}}$ is identical to $\calP_{\setF_{\textup{LR-MPO}}}$ with the matrix-valued factorization specialized to the vector case. Table~\ref{tab:qst_complexity structured} summarizes the per-iteration computational complexity of PGD for different structured quantum states based on the LSE and MLE.

\renewcommand{\arraystretch}{1.5}
\begin{table}[t]
\centering
\caption{Per-iteration computational complexity of PGD for various structured quantum states $\mF\in\setF$ based on LSE and MLE.}
\label{tab:qst_complexity structured}
\begin{tabular}{|c|c|c|}
\hline
\textbf{Feasible Set} & \textbf{Loss Function} & \textbf{Computational Complexity} \\
\hline
$\setF_\textup{simplex}$, $\setF_\textup{Cholesky}$ & \begin{tabular}{@{}c@{}} LSE \\ MLE \end{tabular} & \begin{tabular}{@{}c@{}} $O(d^{2n}QK + d^{3n})$  \\ $O(d^{2n}\textup{NNZ}(\wh\vp) + d^{3n})$ \end{tabular}  \\
\hline
$\setF_\textup{LR}$ & \begin{tabular}{@{}c@{}} LSE \\ MLE \end{tabular}  & \begin{tabular}{@{}c@{}} $O(d^{2n}QK + d^{2n}r^{\textup{LR}})$ \\ $O(d^{2n}\textup{NNZ}(\wh\vp) + d^{2n}r^{\textup{LR}})$  \end{tabular} \\
\hline
$\setF_\textup{MPS}$ & \begin{tabular}{@{}c@{}} LSE \\ MLE \end{tabular} & \begin{tabular}{@{}c@{}} $O(d^{2n} Q K + d^{2n}  + d^{2n} (\max_\ell r_\ell^{\textup{MPS}})^2)$  \\ $O(d^{2n}\mathrm{NNZ}(\wh \vp) + d^{2n}  + d^{2n} (\max_\ell r_\ell^{\textup{MPS}})^2 )$  \end{tabular}\\
\hline
$\setF_\textup{LR-MPO}$ & \begin{tabular}{@{}c@{}} LSE \\ MLE \end{tabular} & \begin{tabular}{@{}c@{}} $O(d^{2n} Q K + d^{2n} r^{\textup{LR}} + d^{2n} (\max_\ell r_\ell^{\textup{LR-MPO}})^2)$  \\ $O(d^{2n}\mathrm{NNZ}(\wh \vp) + d^{2n}r^{\textup{LR}} + d^{2n} (\max_\ell r_\ell^{\textup{LR-MPO}})^2 )$  \end{tabular}\\
\hline
\end{tabular}
\end{table}

Unlike the previously considered structured quantum states, for NDOs  the factor matrix is parametrized implicitly via a neural network map $\mH(\cdot)$, i.e., $\mF \coloneqq \frac{\mH(\mTheta)}{\|\mH(\mTheta)\|_F}\in\setF_{\textup{NDO}}$, where $\mTheta$ denotes a collection of trainable weights. Here, the notation $\mH(\mTheta)$ is introduced solely to make the underlying neural network parametrization explicit; throughout the remainder of the paper, we use $\mF$ directly whenever no ambiguity arises. To efficiently learn the unknown weights, we adopt a first-order optimization method and update $\mTheta$ using the Adam optimizer~\cite{kingma2014adam}, which adaptively adjusts the step size for each parameter:
    \begin{eqnarray}
    \label{Unified method in the FA QST NN}
    \mTheta_{t+1} = \textup{Adam}\bigg(\mTheta_t, \nabla_{\mTheta} g\left(\frac{\mH(\mTheta_t)}{\|\mH(\mTheta_t)\|_F}\right)\bigg),   \ \ \ \mF_{t+1}  = \frac{\mH(\mTheta_{t+1})}{\|\mH(\mTheta_{t+1})\|_F}.
    \end{eqnarray}
The detailed update rules of the Adam optimizer are standard and can be found in~\cite{kingma2014adam}. Since the normalization constraint is embedded directly into the forward pass of the model as a reparameterization, the optimization over the network weights $\mTheta$ is completely unconstrained, allowing us to use Adam directly.
In addition, the per-iteration computational cost of the  NDO approach is dominated by the forward and backward passes of the underlying neural network. Consequently, the overall complexity depends on the specific network architecture and the total number of trainable parameters in $\mTheta$, and is therefore not expressed in closed form. However, as common to all methods summarized in \Cref{tab:qst_complexity structured}, its complexity is at least $O(d^{2n})$, corresponding to the size of the full density matrix.

\subsubsection{Power method (PM) for MLE}
While PGD is a standard method for constrained optimization, a critical issue arises in step size selection. For both LSE and MLE, the effective learning rate depends nontrivially on the structure of the state and the number of qudits $n$, and generally must be re-tuned whenever $n$ changes. This sensitivity originates from two sources: the normalization constraint $\sum_{k=1}^{K} \langle \mA_{q,k}, \mF \mF^\dagger \rangle = 1$ $\forall q$, and the finite number of state copies $M$.
For example, under Haar-random projective measurements with $K = d^n$, there exist factor matrices $\mF$ for which each term $\langle \mA_{q,k}, \mF \mF^\dagger \rangle = O(d^{-n})$. Consequently, both the LSE gradient term $\< \mA_{q,k}, \mF \mF^\dagger \> - \wh p_{q,k}$ and the MLE gradient term $\frac{\wh p_{q,k}}{\< \mA_{q,k}, \mF\mF^\dagger\>}$ exhibit scaling that depends sensitively on $n$ and $M$, making step-size selection increasingly challenging as the system size grows.

To circumvent this limitation, we propose an alternative optimization strategy based on the PM~\cite{journee2010generalized}, which iteratively projects the gradient onto the structured constraint set:
    \begin{eqnarray}
    \label{Unified structured representation of PM MLE}
    \mF_{t+1} = \calP_{\setF}(-\nabla_{\mF} g(\mF_t)),
    \end{eqnarray}
where the projection $\calP_{\setF}(\cdot)$ onto the associated set $\setF$ is given in \Cref{sec:PGD}.
The PM-based update can be interpreted as a large-step variant of PGD in a directional sense, where the iterate is obtained by projecting the negative gradient onto the constraint set. The key reason behind this equivalence is that the projection onto the feasible set is scale-invariant, and therefore eliminates the dependence on the gradient magnitude, retaining only its direction. Consequently, its computational complexity matches that of PGD for MLE as reported in \Cref{tab:qst_complexity structured}. In general, such a formulation would be unstable for arbitrary objectives; however, in this setting it is well-defined due to the normalization induced by the projection, and is specifically tailored to the MLE objective. As a result, it does not directly extend to the LSE formulation.

{To understand why the PM update is well-defined in this setting, we draw a connection to the generalized PM of \cite{journee2010generalized}. While the MLE objective $g(\mF)$ is generally nonconvex in the factor $\mF$ due to the quadratic parametrization $\vrho = \mF\mF^\dagger$, it inherits favorable first-order geometry from the convex negative log-likelihood formulation in the density-matrix variable $\vrho$. Specifically, the MLE gradient in Eq.~\eqref{Unified representation of WPGD LSE} takes the form of a data-dependent linear combination of measurement operators, where each term is weighted by the likelihood ratio $\left\{ \frac{\wh p_{q,k}}{\langle \mA_{q,k}, \mF\mF^\dagger\rangle} \right\}_{q,k}$ and applied to the current iterate $\mF$.
This weighting induces a multiplicative correction mechanism that amplifies directions underestimating the observed statistics while suppressing overestimated components.
As a result, the update exhibits nonlinear reinforcement that is structurally analogous to a generalized power iteration. Moreover, the projection ${\calP}_{\setF}(\cdot)$ onto either $\setF_\textup{LR}$ or $\setF_\textup{LR-MPO}$ enforces the Frobenius normalization constraint $\|\mF\|_F = 1$, thereby removing the scale ambiguity of the gradient and stabilizing the infinite-step update. In this sense, the PM iteration may be interpreted as a normalized projected gradient flow induced by the MLE geometry.}

We now specialize to low-rank quantum states, i.e., $\setF = \setF_{\textup{LR}}$.
By \cref{Unified representation of WPGD LSE,Unified method in the FA QST,Unified structured representation of PM MLE},
the PM update becomes
\begin{align}
\label{eq:RrhoR-factor-form}
\mF_{t+1}  = \frac{\displaystyle\sum_{q=1}^{Q}\sum_{k=1}^{K}\frac{\wh p_{q,k}\mA_{q,k}\mF_t }{\< \mA_{q,k}, \mF_t\mF_t^\dagger\>}}{\left\|\displaystyle\sum_{q=1}^{Q}\sum_{k=1}^{K}\frac{\wh p_{q,k}\mA_{q,k}\mF_t }{\< \mA_{q,k}, \mF_t\mF_t^\dagger\>}\right\|_F}.
\end{align}
Let $\vrho_{t} = \mF_{t} \mF_{t}^\dagger$ and define $\mR_t = \sum_{q=1}^{Q}\sum_{k=1}^{K}\frac{\wh p_{q,k}\mA_{q,k}}{\< \mA_{q,k}, \vrho_t\>}$. Plugging these into Eq.~\eqref{eq:RrhoR-factor-form} yields
\begin{align}
\label{eq:RrhoR-factor-form-2}
\mF_{t+1} = \frac{\mR_t \mF_t}{\|\mR_t\mF_t\|_F},
\end{align}
which induces the following update in terms of the density matrix:
\begin{align}
\label{eq:RrhoR}
\vrho_{t+1} = \mF_{t+1}\mF_{t+1}^\dagger  = \frac{\mR_t \vrho_{t} \mR_t}{\trace(\mR_t \vrho_{t} \mR_t)}.
\end{align}
When $\mF$ is full-rank ($\setF = \setF_{\textup{Simplex}}$), this update recovers the iterative MLE scheme in~\cite{lvovsky2004iterative}, {which is also known in the literature as Cover’s method}. To our knowledge, this provides a new interpretation of that algorithm from a PM perspective. Furthermore, the factorized derivation in Eq.~\eqref{eq:RrhoR-factor-form-2} naturally extends the iterative MLE procedure to structured settings such as low-rank states, and more broadly the PM in Eq.~\eqref{Unified structured representation of PM MLE} to other representations including MPS and MPO with corresponding projection. In the next section, we demonstrate the performance of the PM for recovering various quantum states.

\section{Simulation Results}
\label{Sec:simualtion results}

In this section, we perform numerical QST experiments using Haar-random projective measurements\footnote{Haar-random projective measurements are generated by preparing $Q$ independent Haar-random unitaries $\{\mU_q\}_{q\in[Q]}$ and forming the associated POVMs $\{\mU_q(:,k)  \mU_q^\dagger(:,k),k \in [ d^n] \}$ for each $q \in [Q]$.} to compare reconstruction methods based on LSE and MLE.  The evaluated methods are labeled according to the convention [Structure]-[Algorithm]-[Loss function]: LR-PGD-LSE, LR-PGD-MLE, LR-PM-MLE, LR-MPO-PGD-LSE, LR-MPO-PGD-MLE, LR-MPO-PM-MLE, MLP-PGD-LSE, Transformer-PGD-LSE, MLP-PGD-MLE, and Transformer-PGD-MLE.  For initialization of PGD and PM in the low-rank and LR-MPO models, we set $\mF_0 = \frac{ \mA + \imag\mB}{\|\mA + \imag\mB\|_F}\in \C^{d^n\times r^{\textup{LR}}}$, where the entries of $\mA$ and $\mB$ are independent and identically distributed samples drawn from the standard normal distribution.  For the MLP, all linear layer weights are initialized independently from a normal distribution with zero mean and standard deviation $0.1$, and all biases are initialized to zero. For the transformer, all linear layers—including the query, key, value, and output projection matrices in self-attention, as well as the feedforward sublayers—are initialized using the Xavier uniform distribution with a gain factor of $0.1$.
For each configuration, we conduct 10 Monte Carlo tomographic experiments in which each Haar measurement and result are sampled at random; we average over all 10 trials to report the results. To evaluate the performance of the estimated state $\wh{\vrho}$ against the ground truth $\vrho^\star$, we employ three error metrics: the normalized mean squared error $\|\wh{\vrho} - \vrho^\star\|_F^2 / \|\vrho^\star\|_F^2$, the trace norm $\|\wh{\vrho} - \vrho^\star\|_1$, and the fidelity $\calF(\wh{\vrho}, \vrho^\star)$. Since the Frobenius norm $\|\vrho^\star\|_F$ varies across quantum states with different ranks or bond dimensions, we use a normalized version of $\|\wh{\vrho} - \vrho^\star\|_F^2$  to ensure consistent cross-state comparisons. 

To evaluate the performance of the aforementioned reconstruction methods under diverse structural properties, we consider three representative classes of quantum states:
\begin{itemize}
    \item \textbf{Thermal state} generated from the 1D quantum Ising model:
    \begin{equation}
    \vrho^\star_{\text{thermal}} = \frac{e^{-\mH/T}}{\trace(e^{-\mH/T})},
    \qquad
    \mH = \sum_{j=1}^{n-1}\vsigma_{z,j}\vsigma_{z,j+1} + \sum_{j=1}^{n}\vsigma_{x,j}
    \end{equation}
    where $\vsigma_{a,j} = \mId_{2^{j-1}} \otimes \vsigma_a \otimes \mId_{2^{n-j}},\ a \in\{x,z\}$ with
    \(\vsigma_x = \begin{bmatrix}0 & 1\\1 & 0\end{bmatrix},\quad \vsigma_z = \begin{bmatrix}1 & 0\\0 & -1\end{bmatrix}.\)
    The temperature \(T\) controls the effective rank and, for MPO representations, the bond dimension of \(\vrho^\star_{\text{thermal}}\). Lower \(T\) leads to states closer to the ground state with smaller rank and bond dimension, whereas higher \(T\) increases rank and bond dimension.
    \item \textbf{Greenberger--Horne--Zeilinger (GHZ) state}:
    \begin{equation}
    \vrho^\star_{\text{GHZ}} = \vg \vg^\dagger, \qquad
    \vg = \frac{1}{\sqrt{2}}\begin{bmatrix}1 & 0 & \cdots & 0 & 1\end{bmatrix}^\top \in \mathbb{R}^{2^n \times 1}.
    \end{equation}
    \item \textbf{Zero-texture (Fourier) state}\footnote{The zero-texture state  corresponds to a special case of the Fourier state~\cite{parisio2024quantum}.}:
    \begin{equation}
    \vrho^\star_{\text{zero}} = \vu \vu^\dagger, \qquad
    \vu = \frac{1}{\sqrt{2^n}} \begin{bmatrix}1 & 1 & \cdots & 1\end{bmatrix}^\top \in \mathbb{R}^{2^n \times 1}.
    \end{equation}
\end{itemize}
For the GHZ and zero-texture states, the rank is one and the bond dimension for an MPO representation is minimal. We emphasize that due to the dependence on the system size \(n\) and temperature \(T\), the thermal state does not admit a fixed effective rank or MPO bond dimension. In contrast, the GHZ and zero-texture states admit exact MPO representations with constant bond dimensions ($4$ and $2$, respectively), independent of the system size $n$.

In the first set of experiments, we compare the reconstruction performance of different methods 
with $Q=100$ and $M \in \{5,100,1000\}$.  All experiments are conducted on systems with $n=6$ qubits, and for the thermal state we fix the temperature parameter to $T=0.2$. For $\setF_\textup{LR}$ and $\setF_\textup{LR-MPO}$ cases, the rank and bond dimensions are chosen according to the underlying structure of the target states. For the thermal state, we set $r^{\textup{LR}}=2$ and apply TT-SVD to the reconstructed estimator, with an error tolerance of $10^{-14}$ and the position $B=\lceil n/2\rceil$ in Eq.~\eqref{the set of LR-MPO states}, to adaptively select the MPO bond dimensions. For the rank-one GHZ and zero-texture states, we set $r^{\textup{LR}}=1$ and use the same error tolerance. For $\setF_\textup{NDO}$, we consider both MLPs and transformer-based architectures. The MLP consists of $L=2$ layers with ReLU activations, where the hidden dimension $d_{\textup{MLP}}$ is set to $24$, $8$, and $16$ for the thermal, GHZ, and zero-texture states, respectively. The transformer comprises $L=2$ stacked layers with model dimension $d_{\textup{trans}}$ of $24$, $8$, and $16$ for the three states, respectively.
We use a fixed attention window (chunk) size $D_{\textup{token}}=4$: the full sequence of $d^n r^{\textup{LR}}$ index tuples (under a fixed lexicographic ordering) is partitioned into contiguous blocks of size $D_{\textup{token}}$, and self-attention is computed independently within each block.
Multi-head self-attention with $M_{\textup{attn}}=2$ heads is employed, and each encoder layer includes a position-wise feed-forward network with an expansion factor of four. Detailed neural network architectures are provided in {Appendix}~\ref{Introduction of MLP and Transformer}. No dropout is applied in any of the neural architectures. The specific step sizes and iteration counts used by the different reconstruction methods are summarized in \Cref{parameters of different methods}.
This table is provided to document the hyperparameters required for convergence and to ensure reproducibility of the reported results. We note, however, that iteration counts should not be interpreted as a direct measure of computational efficiency, since the computational cost per iteration varies substantially across different classes of methods (see \Cref{tab:qst_complexity structured}). Moreover, the reconstruction procedures do not assume access to the underlying quantum state; all hyperparameters are selected according to standard benchmark settings and coarse structural priors rather than oracle knowledge of the ground truth.

From \Cref{thermal_summary_full,ghz_summary_corrected,w_summary_corrected}, several overarching trends emerge. First, MLE consistently achieves superior reconstruction accuracy across most regimes, reflecting its principled treatment of  statistical fluctuations.  Second, incorporating MPO structure into low-rank reconstruction leads to an improvement over conventional low-rank methods, underscoring the benefit of exploiting physically motivated tensor-network representations to suppress noise.
Third, NDO models exhibit distinct advantages in selected scenarios---most notably for GHZ and zero-texture states---although their convergence typically requires a larger number of iterations, highlighting a tradeoff between expressive power and optimization efficiency for highly structured quantum states.

\begin{table}[!t]
\renewcommand{\arraystretch}{1}
\centering
\footnotesize
\caption{Experiment 1---Step sizes and iteration counts required for different methods across quantum states.}
\label{parameters of different methods}
\begin{tabular}{|c||c|c|c|}
\hline
\multirow{1}{*}{Methods}
 & \multicolumn{1}{c|}{{Thermal state} }
 & \multicolumn{1}{c|}{{GHZ state}}
 & \multicolumn{1}{c|}{Zero-texture state} \\  \hline

\multicolumn{4}{|c|}{Step sizes / Iteration counts} \\ \hline
LR-PGD-LSE
 & $10$ / $500$ & $40$ / $100$ & $40$ / $100$  \\
LR-PM-MLE
 & -- / $100$ & -- / $100$ & -- / $100$ \\ \hline
LR-MPO-PGD-LSE
 & $40$ / $200$ & $40$ / $100$ & $40$ / $100$   \\
LR-MPO-PM-MLE
 & -- / $100$ & -- / $100$ & -- / $100$  \\ \hline
MLP-PGD-LSE
 & $10^{-3}$ / $500$ & $10^{-3}$ / $500$ & $10^{-3}$ / $500$  \\
Transformer-PGD-LSE
 & $10^{-3}$ / $500$ & $10^{-3}$ / $500$ & $10^{-3}$ / $500$  \\
MLP-PGD-MLE
 & $10^{-2}$ / $500$ & $10^{-2}$ / $500$ & $10^{-2}$ / $500$  \\
Transformer-PGD-MLE
 & $10^{-3}$ / $500$ & $10^{-3}$ / $500$ & $10^{-3}$ / $500$  \\
\hline
\end{tabular}
\end{table}

\begin{table}[!t]
\renewcommand{\arraystretch}{1}
\centering
\footnotesize
\begin{center}
\caption{Experiment 1---Performance comparison of different methods on thermal-state tomography. }
\label{thermal_summary_full}
\begin{tabular}{|c||ccc|ccc|ccc|}
\hline
\multirow{2}{*}{Method}
 & \multicolumn{3}{c|}{$\|\wh\vrho - \vrho^\star \|_F^2/\|\vrho^\star\|_F^2$}
 & \multicolumn{3}{c|}{$\|\wh\vrho - \vrho^\star \|_1$}
 & \multicolumn{3}{c|}{$\calF(\wh\vrho, \vrho^\star)$} \\ \cline{2-10}
 & $M=5$ & $M=100$ & $M=1000$
 & $M=5$ & $M=100$ & $M=1000$
 & $M=5$ & $M=100$ & $M=1000$ \\ \hline
LR-PGD-LSE   & 0.5429 & 0.0425 & 0.0046 & 1.1997 & 0.3779 & 0.1247 & 0.5203 & 0.8308 & 0.9880 \\
LR-PM-MLE   & \textbf{0.4463} & \textbf{0.0303} & \textbf{0.0024} & \textbf{1.0924} & \textbf{0.3151} & \textbf{0.0884} & \textbf{0.5813} & \textbf{0.9227} & \textbf{0.9945} \\ \hline
LR-MPO-PGD-LSE   & 0.4239 & 0.0384 & 0.0042 & 1.0552 & 0.3526 & 0.1171 & 0.6287 & 0.9118 & 0.9897 \\
LR-MPO-PM-MLE   & \textbf{0.3707} & \textbf{0.0259} & \textbf{0.0020} & \textbf{1.0029} & \textbf{0.2911} & \textbf{0.0803} & \textbf{0.6075} & \textbf{0.9235} & \textbf{0.9954} \\ \hline
MLP-PGD-LSE          & 0.5261&	0.0426 &	0.0181 & 1.1789 &	0.3779 &	0.2456  & 0.5263 &	0.8377 &	0.9506 \\
Transformer-PGD-LSE  & \textbf{0.4113}
 &	0.0384 &	0.0062  & \textbf{1.0485}
 &	0.3495
	& 0.1414  & \textbf{0.6018
} &	0.9295
 &	0.9762 \\
MLP-PGD-MLE          & 0.4583 & 0.0225 & 0.0076 & 1.1067 & 0.2688 & 0.1551 & 0.5347 & 0.9566 & 0.9856 \\
Transformer-PGD-MLE  & 0.4267
 & \textbf{0.0129} & \textbf{0.0021} & 1.0733
 & \textbf{0.1996} & \textbf{0.0849} & 0.5778 & \textbf{0.9801} & \textbf{0.9933} \\
\hline
\end{tabular}
\end{center}
\end{table}

\begin{table}[!t]
\renewcommand{\arraystretch}{1}
\centering
\footnotesize
\begin{center}
\caption{Experiment 1---Performance comparison of different methods on GHZ-state tomography.}
\label{ghz_summary_corrected}
\begin{tabular}{|c||ccc|ccc|ccc|}
\hline
\multirow{2}{*}{Method}
 & \multicolumn{3}{c|}{$\|\wh\vrho - \vrho^\star \|_F^2/\|\vrho^\star\|_F^2$}
 & \multicolumn{3}{c|}{$\|\wh\vrho - \vrho^\star \|_1$}
 & \multicolumn{3}{c|}{$\calF(\wh\vrho, \vrho^\star)$} \\ \cline{2-10}
 & $M=5$ & $M=100$ & $M=1000$
 & $M=5$ & $M=100$ & $M=1000$
 & $M=5$ & $M=100$ & $M=1000$ \\ \hline
LR-PGD-LSE   & 0.4286 & 0.0201 & 0.0022 & 0.9259 & 0.2004 & 0.0658 & 0.7857 & 0.9900 & 0.9989 \\
LR-PM-MLE   & \textbf{0.2876} & \textbf{0.0121} & \textbf{0.0010} & \textbf{0.7584} & \textbf{0.1558} & \textbf{0.0466} & \textbf{0.8562} & \textbf{0.9939} & \textbf{0.9994} \\ \hline
LR-MPO-PGD-LSE   & 0.4074 & 0.0189 & 0.0018 & 0.9026 & 0.1945 & 0.0603 & 0.7963 & 0.9905 & 0.9991 \\
LR-MPO-PM-MLE   & \textbf{0.2659} & \textbf{0.0111} & \textbf{0.0009} & \textbf{0.7293} & \textbf{0.1491} & \textbf{0.0426} & \textbf{0.8670} & \textbf{0.9944} & \textbf{0.9995} \\ \hline
MLP-PGD-LSE          & 0.4749 &	0.0271 & 0.0023 & 0.9746&	0.2327 &	0.0684 & 0.7625 &	0.9865 &	0.9988 \\
Transformer-PGD-LSE  & 0.1191
 &	0.0084 &	0.0007  & 0.4880 &	0.1301 &	0.0366  & 0.9404 &	0.9958 &	0.9997 \\
MLP-PGD-MLE          & 0.1221 & \textbf{0.0007} & \textbf{0.0001} & 0.4943 & \textbf{0.0368} & \textbf{0.0168} & 0.9389 & \textbf{0.9997} & \textbf{0.9999} \\
Transformer-PGD-MLE  & \textbf{0.0995} & 0.0035 & 0.0003 & \textbf{0.4462} & 0.0838 & 0.0268 & \textbf{0.9502} & 0.9982 & 0.9998 \\
\hline
\end{tabular}
\end{center}
\end{table}

\begin{table}[!t]
\renewcommand{\arraystretch}{1}
\centering
\footnotesize
\begin{center}
\caption{Experiment 1---Performance comparison of different methods on zero-texture-state  tomography.}
\label{w_summary_corrected}
\begin{tabular}{|c||ccc|ccc|ccc|}
\hline
\multirow{2}{*}{Method}
 & \multicolumn{3}{c|}{$\|\wh\vrho - \vrho^\star \|_F^2/\|\vrho^\star\|_F^2$}
 & \multicolumn{3}{c|}{$\|\wh\vrho - \vrho^\star \|_1$}
 & \multicolumn{3}{c|}{$\calF(\wh\vrho, \vrho^\star)$} \\ \cline{2-10}
 & $M=5$ & $M=100$ & $M=1000$
 & $M=5$ & $M=100$ & $M=1000$
 & $M=5$ & $M=100$ & $M=1000$ \\ \hline
LR-PGD-LSE   & 0.4460 & 0.0218 & 0.0021 & 0.9445 & 0.2086 & 0.0649 & 0.7770 & 0.9891 & 0.9989 \\
LR-PM-MLE   & \textbf{0.3330} & \textbf{0.0112} & \textbf{0.0011} & \textbf{0.8161} & \textbf{0.1495} & \textbf{0.0485} & \textbf{0.8335} & \textbf{0.9944} & \textbf{0.9994} \\ \hline
LR-MPO-PGD-LSE   & 0.4110 & 0.0222 & 0.0023 & 0.9067 & 0.2141 & 0.0680 & 0.7945 & 0.9885 & 0.9988 \\
LR-MPO-PM-MLE   & \textbf{0.3009} & \textbf{0.0102} & \textbf{0.0010} & \textbf{0.7758} & \textbf{0.1425} & \textbf{0.0471} & \textbf{0.8495} & \textbf{0.9949} & \textbf{0.9994} \\ \hline
MLP-PGD-LSE          & 0.4299 &	0.0212 & 0.0018 & 0.9273 &	0.2059 &	0.0594 & 0.7850 &	0.9894 &	0.9991 \\
Transformer-PGD-LSE  & 0.1517 &	0.0057 &	0.0013  & 0.5509 &	0.1066 &	0.0517  & 0.9241 &	0.9972 &	0.9993 \\
MLP-PGD-MLE          & 0.2810 & 0.0093 & 0.0006 & 0.7497 & 0.1361 & 0.0357 & 0.8595 & 0.9954 & 0.9997 \\
Transformer-PGD-MLE  & \textbf{0.1187} & \textbf{0.0012} & \textbf{0.0002} & \textbf{0.4873} & \textbf{0.0487} & \textbf{0.0176} & \textbf{0.9406} & \textbf{0.9994} & \textbf{0.9999} \\
\hline
\end{tabular}
\end{center}
\end{table}

In the second experiment, we investigate the convergence behavior of PGD-MLE-based and PM-MLE-based methods. Specifically, we consider a thermal state at temperature $T = 0.2$ with $Q = 100$ and $M = 100$, under feasible sets $\mF\in\setF_\textup{LR},\setF_\textup{LR-MPO}$. As shown in \cref{Convergence analysis of LR MLE and PM,Convergence analysis of LR_MPO MLE and PM}, PM-MLE-based algorithms consistently exhibit faster convergence and achieve lower normalized mean-squared error and trace norm, as well as higher fidelity, compared to their PGD-MLE-based counterparts.
While the convergence rate and reconstruction accuracy of PGD-MLE-based algorithms can be improved by adopting larger step sizes, their behavior at best merely approaches that of PM and requires identifying appropriate step sizes, which often entails computationally expensive hyperparameter tuning. By contrast, PM-MLE-based algorithms naturally operate in a parameter-free manner, avoiding this additional tuning burden while achieving competitive reconstruction accuracy.

\begin{figure*}[!ht]
\centering
\subfigure{
\begin{minipage}[t]{0.31\textwidth}
\centering
\includegraphics[width=5.5cm]{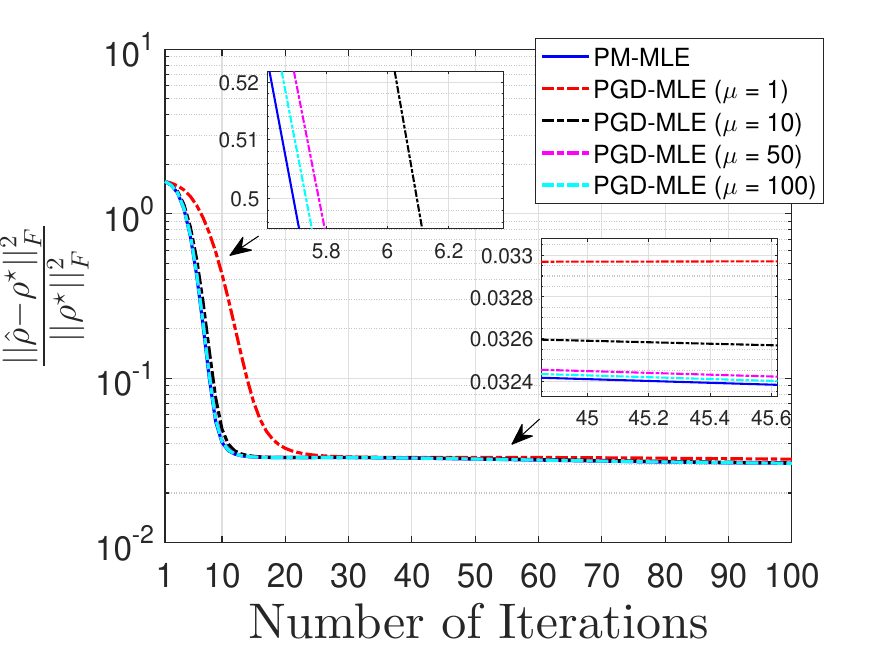}
\end{minipage}
\label{MLE_PM_LSE}
}
\subfigure{
\begin{minipage}[t]{0.31\textwidth}
\centering
\includegraphics[width=5.5cm]{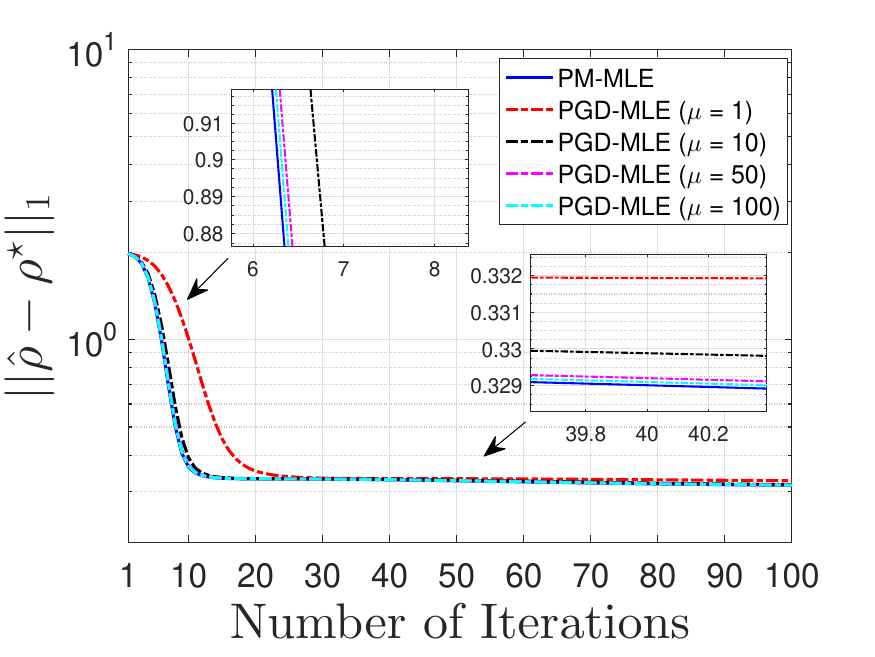}
\end{minipage}
\label{MLE_PM_trace}
}
\subfigure{
\begin{minipage}[t]{0.31\textwidth}
\centering
\includegraphics[width=5.5cm]{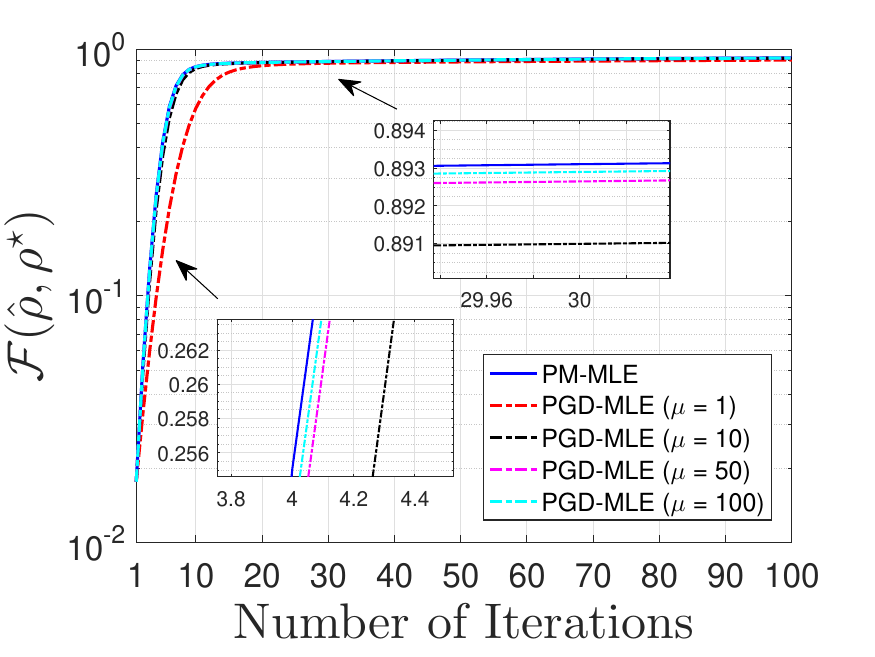}
\end{minipage}
\label{MLE_PM_fidelity}
}
\caption{Experiment 2---Convergence performance comparison between LR-PGD-MLE and LR-PM-MLE.}
\label{Convergence analysis of LR MLE and PM}
\end{figure*}

\begin{figure*}[!ht]
\centering
\subfigure{
\begin{minipage}[t]{0.31\textwidth}
\centering
\includegraphics[width=5.5cm]{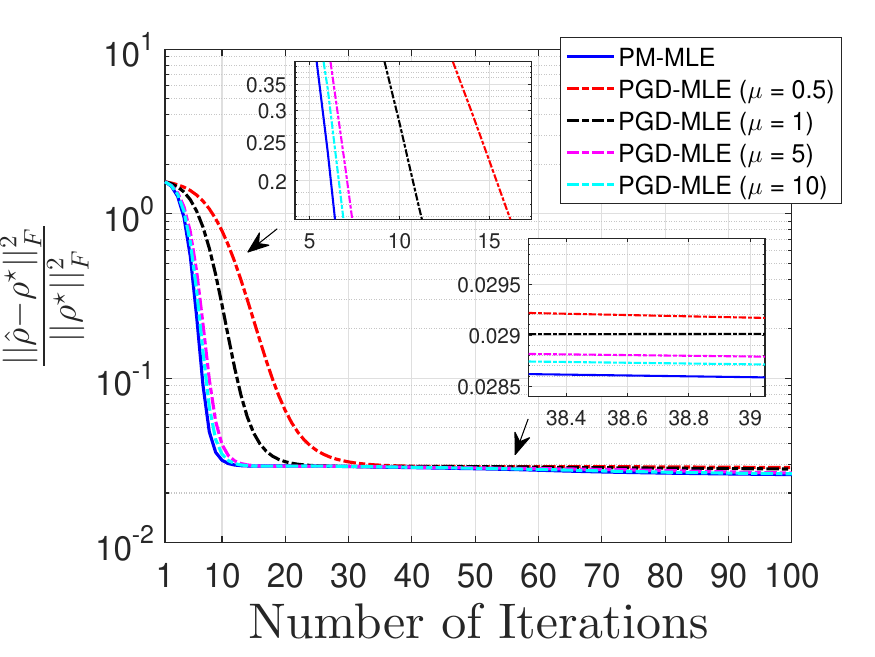}
\end{minipage}
\label{LR_MPO MLE_PM_LSE}
}
\subfigure{
\begin{minipage}[t]{0.31\textwidth}
\centering
\includegraphics[width=5.5cm]{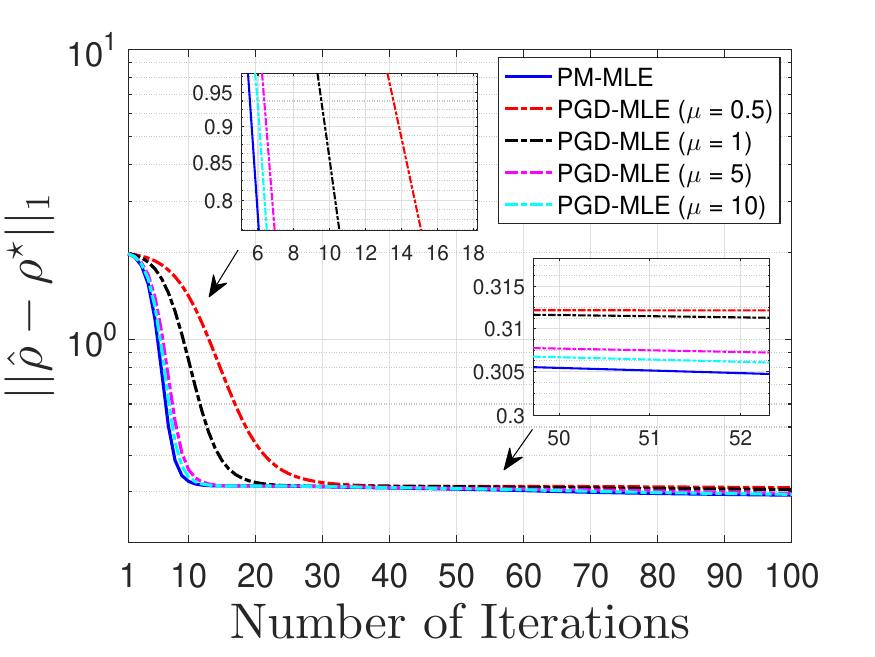}
\end{minipage}
\label{LR_MPO MLE_PM_trace}
}
\subfigure{
\begin{minipage}[t]{0.31\textwidth}
\centering
\includegraphics[width=5.5cm]{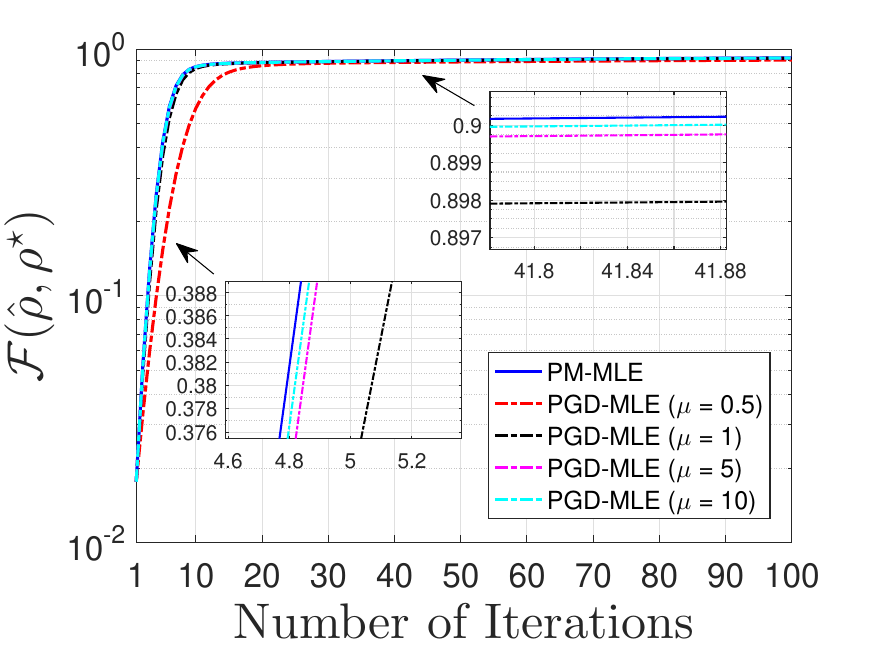}
\end{minipage}
\label{LR_MPO MLE_PM_fidelity}
}
\caption{Experiment 2---Convergence performance comparison between LR-MPO-PGD-MLE and LR-MPO-PM-MLE.}
\label{Convergence analysis of LR_MPO  MLE and PM}
\end{figure*}

Building on the preceding analysis, which establishes the strong empirical performance of PMs, we next examine their scalability with respect to the system size $n$ and the assumed low-rank parameter $r^{\textup{LR}}$. In this third experiment, the numbers of iterations are fixed to $50$ and $150$ for LR-PM-MLE and LR-MPO-PM-MLE, respectively. The larger iteration budget for LR-MPO-PM-MLE is due to the additional TT-SVD projection step required at each iteration, which introduces extra approximation error and typically leads to slower convergence compared to LR-PM-MLE.
As shown in \cref{LR_and_MPO_performance_comparison_PM_with_diff_r}, we observe a clear performance separation between the two approaches as the system size increases: LR-MPO-PM-MLE consistently outperforms LR-PM-MLE for larger $n$. This advantage can be attributed to the more favorable scaling behavior of the MPO-based formulation, whose reconstruction error grows only linearly with $n$, in contrast to the exponential scaling exhibited by the standard low-rank approach.
Moreover, both methods exhibit a pronounced dependence on the choice of $r^{\textup{LR}}$, with an optimal rank emerging that balances model expressiveness and estimation error. In particular, for low-rank quantum states, $r^{\textup{LR}} = 2$ achieves the best performance in terms of squared error, while $r^{\textup{LR}} = 1$ performs best in trace-norm and fidelity. For LR-MPO states, $r^{\textup{LR}} = 2$ yields the overall best performance in this setting.  This behavior can be explained by examining the cumulative energy captured by the leading eigenvalues of the ground-truth density matrix: the top-$1$ and top-$2$ eigenvalues account for approximately $[99.79, 98.85, 96.99, 94.49, 91.69, 88.81, 85.99, 83.27]\%$ and $[100, 100, 100, 99.98, 99.92, 99.79, 99.57, 99.24]\%$ of the total spectral mass for $n = 2, \dots, 9$, respectively. This indicates that $r^{\textup{LR}} = 2$ already captures the dominant spectral structure of the state.  As the system size increases, the many-body energy spectrum becomes increasingly dense, and the Gibbs distribution spreads over a larger number of eigenstates. As a result, a higher effective rank is required to capture the same fraction of the spectral mass, leading to the observed growth of the optimal $r^{\textup{LR}}$ with $n$.

\begin{figure*}[!tb]
\centering
\subfigure{
\begin{minipage}[t]{0.31\textwidth}
\centering
\includegraphics[width=5.5cm]{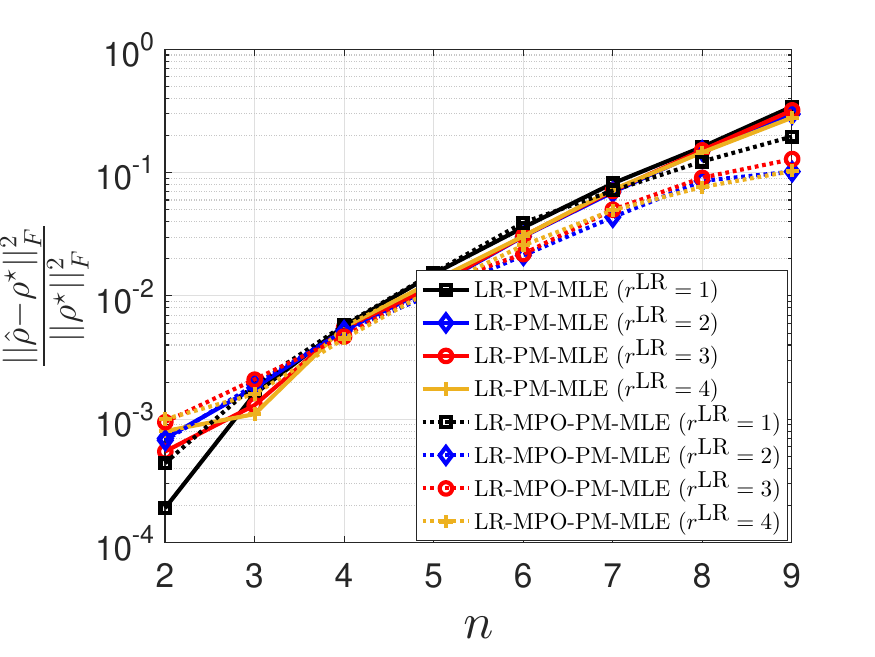}
\end{minipage}
\label{PM_LSE}
}
\subfigure{
\begin{minipage}[t]{0.31\textwidth}
\centering
\includegraphics[width=5.5cm]{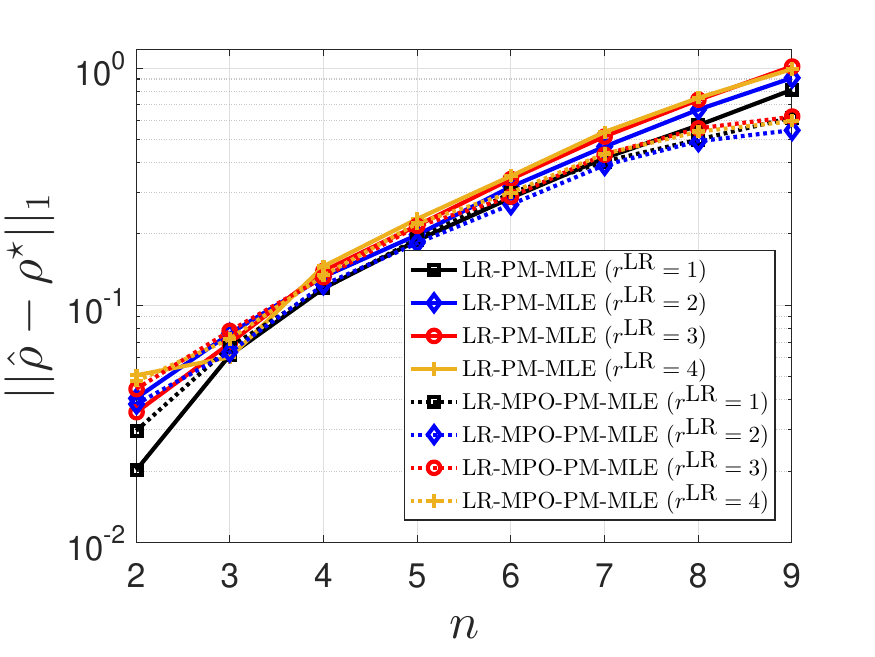}
\end{minipage}
\label{PM_trace}
}
\subfigure{
\begin{minipage}[t]{0.31\textwidth}
\centering
\includegraphics[width=5.5cm]{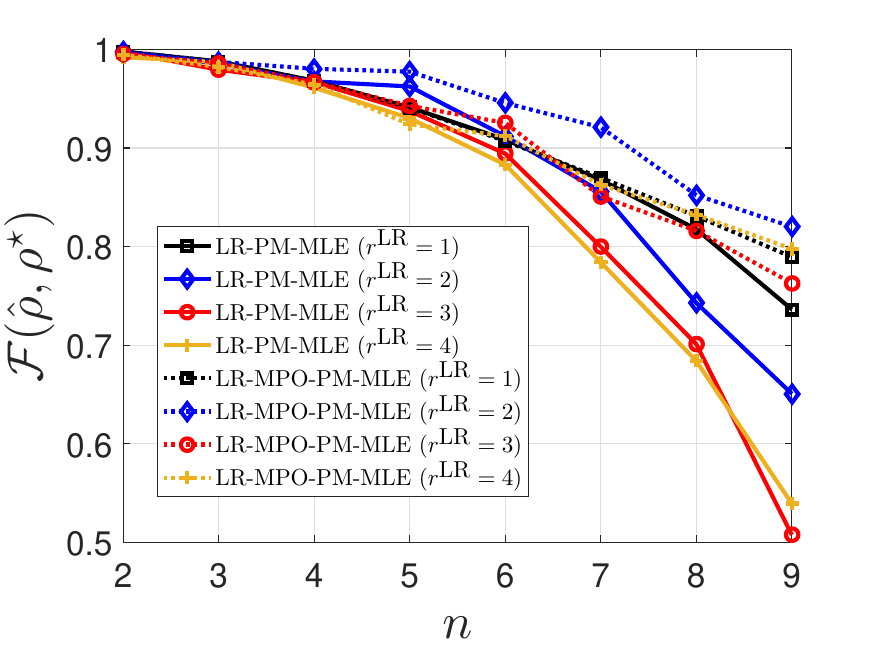}
\end{minipage}
\label{PM_fidelity}
}
\caption{Experiment 3---Performance comparison between LR-PM-MLE and MPO-PM-MLE for the thermal state at $T=0.2$, under varying system size $n$ and rank $r^{\textup{LR}}$. }
\label{LR_and_MPO_performance_comparison_PM_with_diff_r}
\end{figure*}

In the fourth experiment, we study the performance of $\mF\in\setF_\textup{NDO}$ as a function of architectural complexity, including the hidden dimension $d_{\textup{MLP}}$ for MLPs, the model dimension $d_{\textup{Trans}}$ for transformer architectures, and the network depth $L$.  We adopt the same experimental setting as in the second experiment and fix $r^{\textup{LR}} = 2$. For MLP-based models with varying $d_{\textup{MLP}}$, the step size is selected from $[0.01,\,0.03]$, and the number of iterations ranges from $300$ to $1000$, increasing with $d_{\textup{MLP}}$. When varying the depth $L$, the step size lies in $[0.01,\,0.04]$, and the number of iterations ranges from $600$ to $1200$, again increasing with $L$. For transformer-based models with different model dimensions $d_{\textup{Trans}}$, step sizes in the range $[10^{-3},\,4\times10^{-3}]$ are used, with the number of iterations varying between $400$ and $1000$ and increasing with $d_{\textup{Trans}}$. Similarly, when varying the depth $L$, the step size is chosen from $[10^{-3},\,5\times10^{-3}]$, and the number of iterations spans $600$ to $1200$, increasing monotonically with $L$. Since larger networks generally exhibit slower optimization dynamics, additional iterations are required to reach convergence. The reported iteration numbers were chosen such that the reconstruction error measures had essentially plateaued, with further training resulting in negligible changes. From \cref{MLP_Transformer_diff_d_summary,MLP_Transformer_diff_L_summary}, we observe that insufficient model capacity---manifested as overly small values of $d_{\textup{MLP}}$, $d_{\textup{Trans}}$, or $L$---leads to noticeably degraded reconstruction performance. Increasing these architectural parameters substantially improves stability as the system size $n$ grows, an effect that is most clearly reflected in the fidelity metric. Compared with the performance variation with increasing width in \cref{MLP_Transformer_diff_d_summary}, the effect of increasing depth in \cref{MLP_Transformer_diff_L_summary} is less pronounced, as $d_{\textup{MLP}} = 16$ and $d_{\textup{Trans}} = 16$ are already sufficient to achieve good reconstruction performance with shallow architectures. However, this performance gain comes at a nontrivial computational cost: larger model dimensions and deeper architectures require a significantly increased number of iterations and, consequently, longer runtime to achieve convergence.

\begin{figure*}[!tb]
\centering
\subfigure{
\begin{minipage}[t]{0.31\textwidth}
\centering
\includegraphics[width=5.5cm]{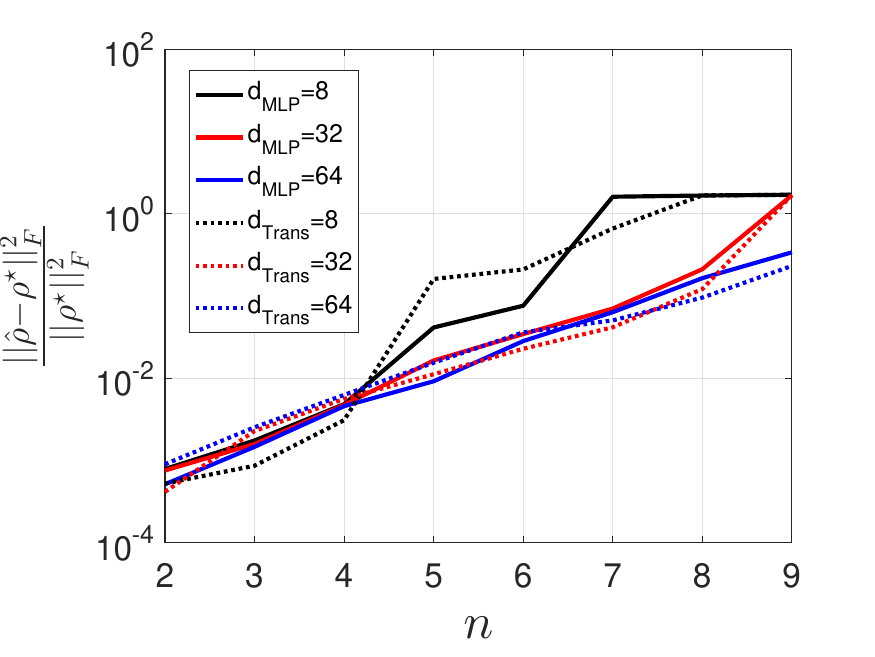}
\end{minipage}
\label{MLP Transformer_MSE_diff_d}
}
\subfigure{
\begin{minipage}[t]{0.31\textwidth}
\centering
\includegraphics[width=5.5cm]{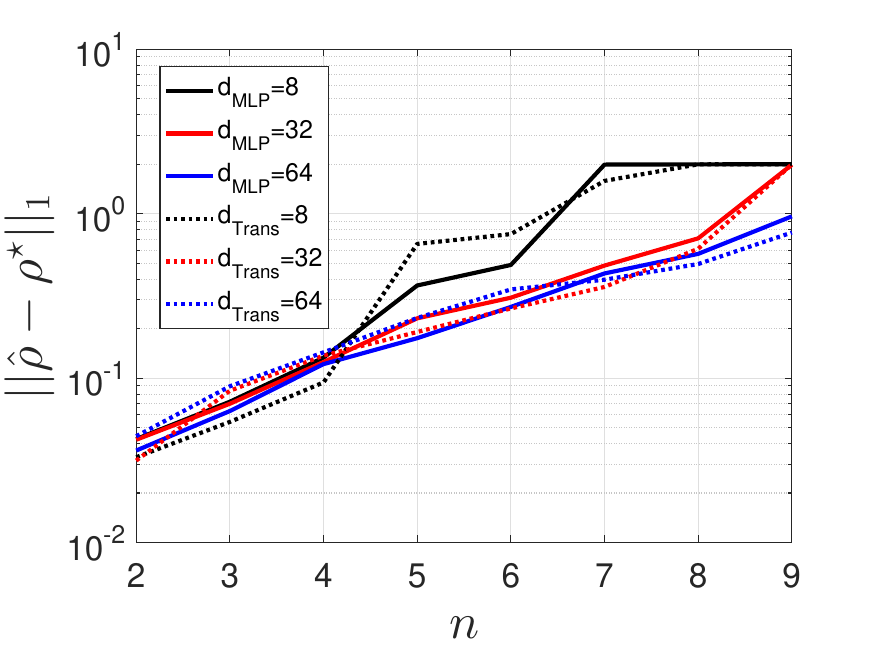}
\end{minipage}
\label{MLP Transformer_trace_diff_d}
}
\subfigure{
\begin{minipage}[t]{0.31\textwidth}
\centering
\includegraphics[width=5.5cm]{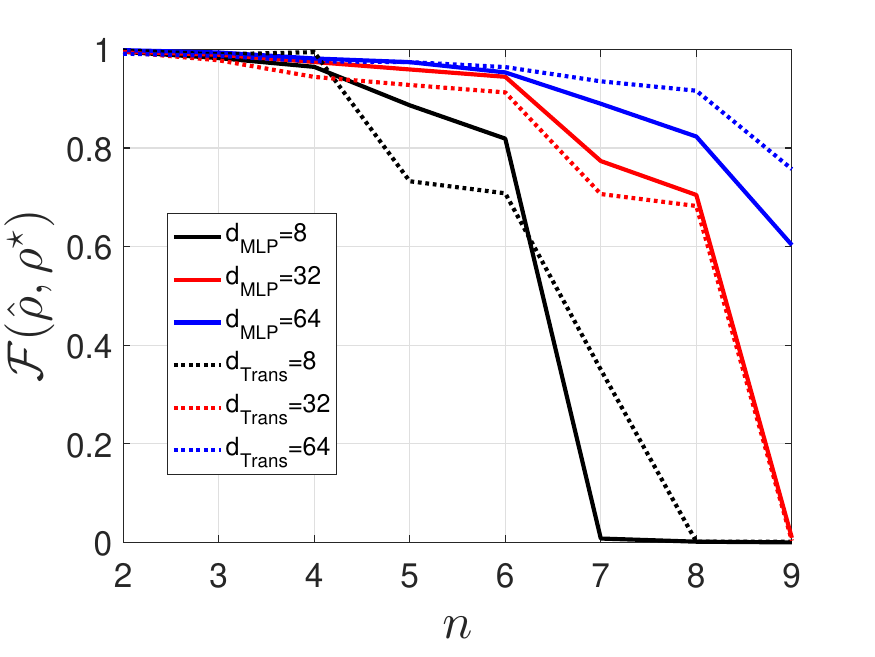}
\end{minipage}
\label{MLP Transformer_fidelity_diff_d}
}
\caption{Experiment 4---Performance comparison of MLP- and Transformer-based QST for the thermal state at $T=0.2$, under varying system size $n$, hidden dimension $d_{\textup{MLP}}$ and model dimension $d_{\textup{Trans}}$, with $r^{\textup{LR}} = 2$, $L = 2$, and $d_{\textup{MLP}}, d_{\textup{Trans}} \in \{8,32,64\}$.}
\label{MLP_Transformer_diff_d_summary}
\end{figure*}

\begin{figure*}[!t]
\centering
\subfigure{
\begin{minipage}[t]{0.31\textwidth}
\centering
\includegraphics[width=5.5cm]{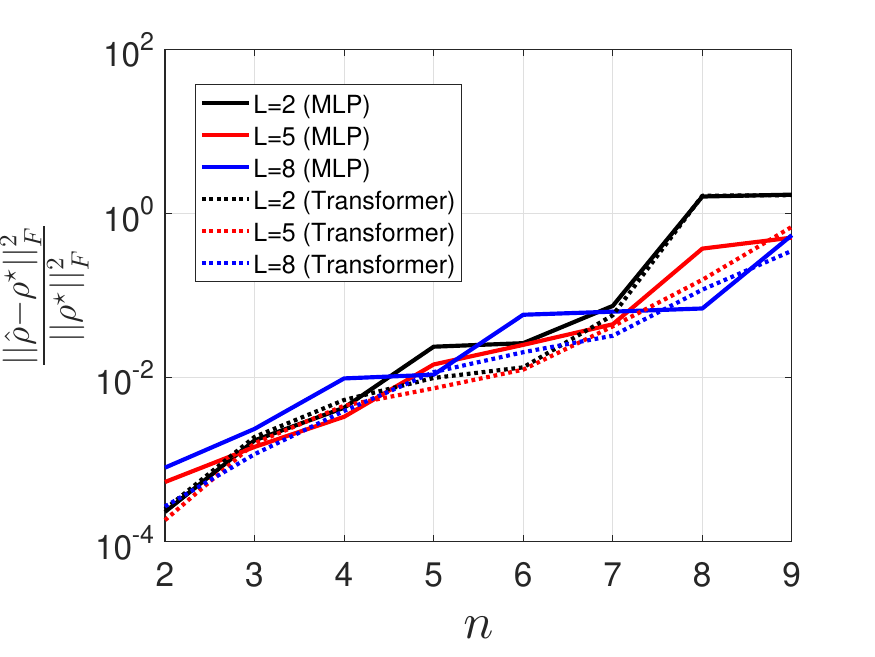}
\end{minipage}
\label{MLP Transformer_MSE_diff_L}
}
\subfigure{
\begin{minipage}[t]{0.31\textwidth}
\centering
\includegraphics[width=5.5cm]{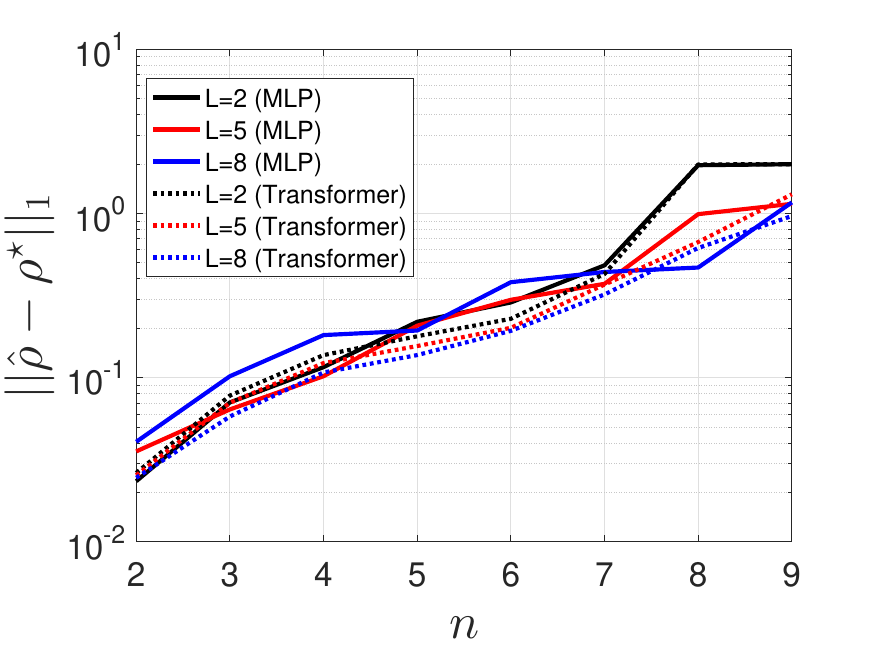}
\end{minipage}
\label{MLP Transformer_trace_diff_L}
}
\subfigure{
\begin{minipage}[t]{0.31\textwidth}
\centering
\includegraphics[width=5.5cm]{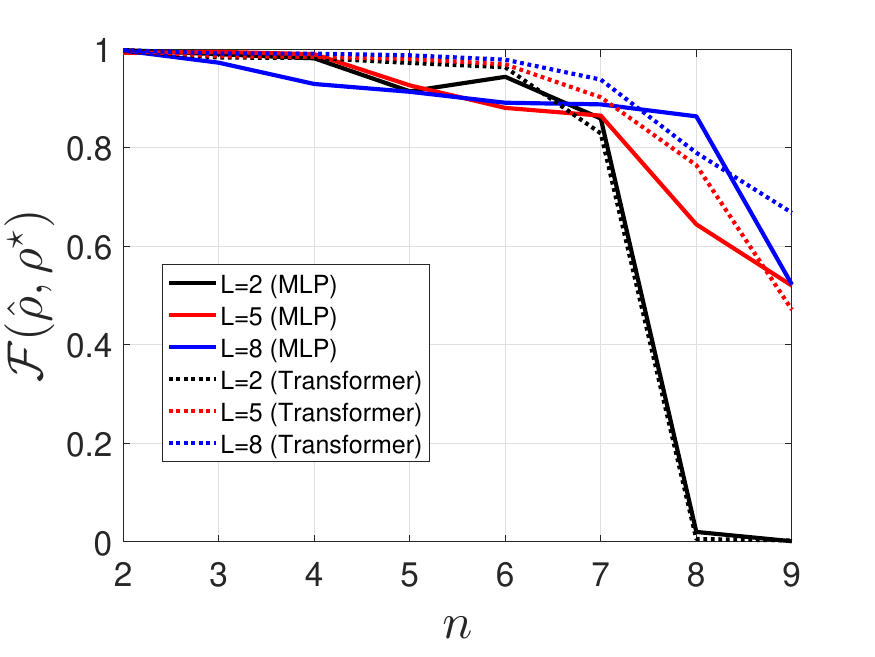}
\end{minipage}
\label{MLP_Transformer_fidelity_diff_L}
}
\caption{Experiment 4---Performance comparison of MLP- and Transformer-based QST for the thermal state at $T=0.2$, under varying system size $n$ and number of layers $L$, with $r^{\textup{LR}} = 2$, hidden dimension $d_{\textup{MLP}} = 16$ and model dimension $d_{\textup{Trans}} = 16$, for $L \in \{2,5,8\}$.}
\label{MLP_Transformer_diff_L_summary}
\end{figure*}

In the fifth experiment, we adopt the same settings as in the second experiment and fix the $r^{\textup{LR}} = 2$. For MLP-based NDOs, we set the network depth to $L = 2$ and the hidden dimension to $d_{\textup{MLP}} = 24$. As summarized in \Cref{tab:activation_comparison}, we evaluate the reconstruction performance of different activation functions, including ReLU, LeakyReLU, Tanh, Sigmoid, GELU, and SiLU. The corresponding step sizes are chosen as $0.01$, $0.01$, $0.1$, $0.1$, $0.1$, and $1$, respectively, and the number of iterations is fixed to $400$. The results indicate that ReLU and LeakyReLU achieve comparable performance and consistently outperform the other activation functions considered in this study.

\begin{table}[!ht]
\renewcommand{\arraystretch}{1}
\centering
\footnotesize
\caption{Experiment 5---Performance comparison of MLP-based QST for the thermal state at $T=0.2$, under varying activation functions, with $r^{\textup{LR}} = 2$, $L = 2$, and hidden dimension $d_{\textup{MLP}} = 24$.}
\label{tab:activation_comparison}
\begin{tabular}{|c|ccc|}
\hline
Activation & $\|\wh\vrho - \vrho^\star \|_F^2/\|\vrho^\star\|_F^2$ &  $\|\wh\vrho - \vrho^\star \|_1$ & $\calF(\wh\vrho, \vrho^\star)$ \\
\hline
ReLU       & 0.0225 & 0.2688 & 0.9566 \\
LeakyReLU & \textbf{0.0218} & \textbf{0.2649} & \textbf{0.9585} \\
Tanh      & 0.0295 & 0.3130 & 0.8880 \\
Sigmoid   & 0.0511 & 0.3919 & 0.8866 \\
GELU      & 0.0242 & 0.2778 & 0.9549 \\
SiLU      & 0.0354 & 0.3440 & 0.9000 \\
\hline
\end{tabular}
\end{table}

In the sixth and seventh experiments, we adopt the same experimental setting as in the second experiment and again fix $r^{\textup{LR}} = 2$. For NDOs based on transformer architectures, we consider multi-head self-attention with varying numbers of attention heads $M_{\textup{attn}}$ and numbers of sequence length $D_{\textup{token}}$. The network depth is set to $L = 2$, the hidden dimension is fixed to $d_{\textup{trans}} = 24$, the step size is $0.001$, and the number of iterations is $600$. Under this setting, the optimization converges for all considered configurations, and further increasing the number of iterations does not lead to any noticeable improvement in the reconstruction performance. As shown in \Cref{tab:transformer_M_comparison}, the reconstruction performance exhibits a moderate dependence on the number of attention heads. In particular, $M_{\textup{attn}} = 2$ achieves the best overall performance across the considered metrics, indicating that a moderate number of attention heads provides a favorable balance for the reconstruction task. We further investigate the effect of the attention window size $D_{\textup{token}}$ in \Cref{tab:transformer_Token_comparison}. We observe that the best performance is consistently achieved at $D_{\textup{token}} = 4$. Both smaller and larger attention window sizes lead to slightly degraded performance in terms of the considered error measures, while the variations remain relatively moderate. This suggests that the performance is not strongly sensitive to the attention window size once a minimal receptive field is included. Increasing $D_{\textup{token}}$ beyond this regime allows highly correlated tokens to interact within the same block, providing limited additional information but potentially complicating optimization, whereas an overly small attention window restricts cross-token interaction within each block. Overall, the reported best-performing configurations should be interpreted as optimal only within the range of parameters explored in this study.

\begin{table}[!ht]
\renewcommand{\arraystretch}{1}
\centering
\footnotesize
\caption{Experiment 6---Performance comparison of transformer-based QST for the thermal state at $T=0.2$, under varying $M_{\textup{attn}}$, with $r^{\textup{LR}} = 2$, $L = 2$, $D_{\textup{token}} = 4$, and model dimension $d_{\textup{trans}} = 24$.}
\label{tab:transformer_M_comparison}
\begin{tabular}{|c|ccc|}
\hline
$M_{\textup{attn}}$ & $\|\wh\vrho - \vrho^\star \|_F^2/\|\vrho^\star\|_F^2$ &  $\|\wh\vrho - \vrho^\star \|_1$ & $\calF(\wh\vrho, \vrho^\star)$ \\
\hline
1   & 0.0172 & 0.2366 & 0.9660 \\
2   & \textbf{0.0129} & \textbf{0.1996} & \textbf{0.9801} \\
4   & 0.0185 & 0.2425 & 0.9669 \\
8   & 0.0244 & 0.2811 & 0.9520 \\
24  & 0.0178 & 0.2402 & 0.9632 \\
\hline
\end{tabular}
\end{table}

\begin{table}[!ht]
\renewcommand{\arraystretch}{1}
\centering
\footnotesize
\caption{Experiment 7---Performance comparison of transformer-based QST for the thermal state at $T=0.2$, under varying $D_{\textup{token}}$, with $r^{\textup{LR}} = 2$, $L = 2$, $M_{\textup{attn}} = 2$, and model dimension $d_{\textup{trans}} = 24$.}
\label{tab:transformer_Token_comparison}
\begin{tabular}{|c|ccc|}
\hline
$D_{\textup{token}}$ & $\|\wh\vrho - \vrho^\star \|_F^2/\|\vrho^\star\|_F^2$ &  $\|\wh\vrho - \vrho^\star \|_1$ & $\calF(\wh\vrho, \vrho^\star)$ \\
\hline
1   & 0.0217 & 0.2618 & 0.9629 \\
2   & 0.0238 & 0.2780 & 0.9508 \\
4   & \textbf{0.0129} & \textbf{0.1996} & \textbf{0.9801} \\
8   & 0.0157 & 0.2254 & 0.9708 \\
16  & 0.0151 & 0.2230 & 0.9688 \\
32  & 0.0161 & 0.2270 & 0.9703 \\
64  & 0.0189 & 0.2484 & 0.9620 \\
\hline
\end{tabular}
\end{table}

In the eighth experiment, we compare the performance of Burer--Monteiro models ($\setF_\textup{simplex}$ and $\setF_\textup{LR}$) with Cholesky factorization ($\setF_\textup{Cholesky})$.
We adopt the same settings as in the first experiment and employ $Q=100$ Haar-random projective measurements with  $M = 10$ shots per POVM.
As shown in \Cref{comparison RF and CF n6_states}, this approach enables efficient optimization within the Cholesky framework. For Burer--Monteiro models, we observe that the lowest reconstruction error is achieved at $r^{\textup{LR}} = 2$ for the thermal state, whereas $r^{\textup{LR}} = 1$ yields the best performance for both the GHZ and zero-texture states. This observation is consistent with the theoretical prediction that the reconstruction error scales with the effective number of degrees of freedom.
To ensure a fair comparison in terms of model size, we also evaluate Burer--Monteiro factorization with $r^{\textup{LR}} = 32$, which has a comparable number of parameters to the Cholesky model.
The resulting reconstruction accuracy is comparable to that of the Cholesky model, and similar behavior is observed for the simplex model ($r^{\textup{LR}} = d^n= 64$). Unlike the Cholesky parametrization, however, the Burer--Monteiro framework naturally accommodates low-rank factorizations, leading to substantially reduced parameter complexity and improved scalability in structured quantum state reconstruction.


\begin{table}[!ht]
\renewcommand{\arraystretch}{1.2}
\centering
\footnotesize
\caption{Experiment 8---Performance comparison of Burer--Monteiro factorization and Cholesky factorization. The Hilbert space has dimension $d^n=2^6$. The ground-truth ranks for the thermal, GHZ, and zero-texture states are $2$, $1$, and $1$, respectively. }
\label{comparison RF and CF n6_states}
\begin{tabular}{|c||c|c|c|}
\hline
\multirow{1}{*}{Feasible Set $\mF\in\setF$}
 & \multicolumn{1}{c|}{$\|\wh\vrho - \vrho^\star \|_F^2/\|\vrho^\star\|_F^2$}
 & \multicolumn{1}{c|}{$\|\wh\vrho - \vrho^\star \|_1$}
 & \multicolumn{1}{c|}{$\calF(\wh\vrho, \vrho^\star)$} \\  \hline

\multicolumn{4}{|c|}{Thermal state} \\ \hline
$\setF_\textup{LR}$ ($r^{\textup{LR}}=2$)
 &  \textbf{0.2373}  &  \textbf{0.8202}   &  \textbf{0.6878}  \\
$\setF_\textup{LR}$  ($r^{\textup{LR}}=32$)
 &  0.2490  & 0.8753   & 0.6294   \\
 $\setF_\textup{simplex}$
 &  0.2524   &  0.8873    & 0.6119   \\
$\setF_\textup{Cholesky}$
 &   0.2538   &  0.8903   &   0.6149   \\
\hline

\multicolumn{4}{|c|}{GHZ state} \\ \hline
$\setF_\textup{LR}$ ($r^{\textup{LR}}=1$)
 & \textbf{0.1395}   &   \textbf{0.5266}   &  \textbf{0.9302}   \\
$\setF_\textup{LR}$  ($r^{\textup{LR}}=32$)
 &   0.1877   &  0.7562   &  0.7206    \\
 $\setF_\textup{simplex}$
 & 0.2057   &  0.7967     &   0.6836    \\
$\setF_\textup{Cholesky}$
 &   0.2023  &   0.7912   &  0.6902     \\ \hline

\multicolumn{4}{|c|}{Zero-texture state} \\ \hline
$\setF_\textup{LR}$ ($r^{\textup{LR}}=1$)
 & \textbf{0.1584}  &   \textbf{0.5607}   &   \textbf{0.9208}    \\
$\setF_\textup{LR}$ ($r^{\textup{LR}}=32$)
 &   0.2027   &  0.7901    &  0.7157   \\
  $\setF_\textup{simplex}$
 &  0.2035    &  0.7928    & 0.6988   \\
$\setF_\textup{Cholesky}$
 & 0.2144
  &  0.7912
   & 0.7011   \\
 \hline
\end{tabular}
\end{table}

\section{Conclusion}
In this paper, we introduced a unified, structured factorization framework for QST that integrates physical validity and structural priors within a single mathematical formulation. By parametrizing the density matrix through a constrained factorization, the proposed framework enforces physical validity together with prescribed structural priors directly within the reconstruction process, rather than relying on heuristic regularization or post hoc projection. This formulation provides a common backbone for a broad family of structured quantum state models, including low-rank states, tensor-network representations such as MPSs and LR-MPOs, as well as NDOs.

Within this unified framework, we formulated structured QST as a constrained optimization problem and studied both LSE- and MLE-based reconstruction criteria. For LSE, we established sample complexity guarantees across a wide range of structured state classes and developed geometry-aware optimization algorithms that naturally respect the imposed constraints. For MLE, we identified a distinctive structural property of the likelihood objective that enables a PM-based optimization strategy, avoiding step-size tuning while achieving stable convergence. Together, these results demonstrate that the proposed framework not only unifies previously disparate approaches to structured QST, but also leads to concrete algorithmic and theoretical advantages.


\section{Acknowledgments}
\label{sec: ack}

We acknowledge funding support from the National Science Foundation under grants CCF-2241298, ECCS-2409701, and ECCS-2540189.  ZQ gratefully acknowledges support from the MICDE Research Scholars Program at the University of Michigan. ZZ also acknowledges support from the Center for Quantum Information Science and Engineering at the Ohio State University.

\newpage

\appendices
\label{Proof of unified sample complexity}

\section{Neural Parameterizations of the Factor}
\label{Introduction of MLP and Transformer}

In this section, we introduce neural parameterizations of the factor $\mF \in \C^{d^n \times r^{\textup{LR}}}$ used in the NDO framework.  To achieve this, we adopt a unified design principle consisting of three components: $(i)$ embedding the discrete indices into a continuous representation, $(ii)$ processing the resulting representation using either a MLP or a transformer architecture, and $(iii)$ applying a linear projection to obtain the real and imaginary parts of the target entries. This formulation allows us to treat both architectures within a common functional framework while preserving their distinct inductive biases.

We now make the above pipeline explicit in terms of the input representation. For the MLP parameterization, the index tuple $(i_1,\ldots,i_n,j)$ is mapped to a deterministic feature vector
\begin{eqnarray}
\label{definition of input from MLP}
\vx = (i_1,\ldots,i_n,j)^\top \in \R^{n+1},
\end{eqnarray}
where the encoding is fixed and no learnable embedding layer is introduced.

Before introducing the transformer parameterization, we note that the representation of the discrete index set is not unique. One may tokenize individual indices, groups of indices, or even partition the entire index set into multiple shorter sequences. In this work, we adopt a joint sequence representation over the complete index set.
Specifically, we construct a sequence representation over the full index set $\{(i_1,\ldots,i_n,j)\} \in [d]^n \times [r^{\textup{LR}}]$, ordered according to a fixed lexicographic ordering.  The resulting input is given by
\begin{eqnarray}
\label{definition of input from transformer}
\mX = [\vx_1,\ldots,\vx_{d^n r^{\textup{LR}}}] \in \R^{(n+1) \times d^n r^{\textup{LR}}},
\end{eqnarray}
where each column $\vx_k = (i_1^{(k)},\ldots,i_n^{(k)},j^{(k)})$ encodes a single index tuple. In contrast to standard transformer-based NDO architectures, which process a single index tuple per token, we organize the entire index set into a sequence of tokens, with self-attention applied within local, non-overlapping blocks of this sequence (Eq.~\eqref{eq:multihead_attn}). Increasing the block size allows the model to exploit interactions among a larger number of index tuples, which leads to improved convergence behavior. In the extreme case of a single block spanning the entire sequence, the model recovers full global interactions across the index set.

Each backbone defines a mapping from an input index tuple to a complex-valued entry of the factor matrix $\mF$. Specifically, for both the MLP and transformer parameterizations, each input $\vx_k$ corresponds to a single index tuple $(i_1^{(k)},\ldots,i_n^{(k)},j^{(k)})$, and is processed by the corresponding neural network to produce a complex-valued output.
$(i)$ In the MLP parameterization, this mapping is realized by applying a feed-forward network to $\vx_k$, followed by a linear projection that outputs the real and imaginary parts of the corresponding entry of $\mF$. $(ii)$ In the transformer parameterization, the same mapping is implemented using a sequence model with self-attention layers, and each token representation is mapped via a position-shared readout network to the real and imaginary parts of the corresponding entry indexed by that token.

Thus, both parameterizations define a mapping from selected index tuples to the real and imaginary parts of the corresponding complex-valued entries of $\mF$. The resulting matrix is finally normalized to satisfy $\|\mF\|_F = 1$, which is applied uniformly across both parameterizations.

\paragraph*{Neural network models: MLP and transformer} We next formalize the architectural components underlying the two backbone parameterizations. Specifically, we define the feed-forward layers underlying the MLP, followed by the attention and two-layer MLP layers that compose a single transformer layer. These formulations make explicit how the networks map input indices to the factor $\mF$, while preserving flexibility and expressivity within the factorization framework.

\begin{itemize}
\item{\textbf{Feed-forward MLP.}} To parametrize the mapping efficiently, we adopt a standard feed-forward MLP. MLPs offer a flexible function class capable of capturing nonlinear dependencies while remaining easy to train and integrate with our parametrization framework. Given an input vector $\vx$, an $L$-layer  MLP is defined by
\begin{align}
    \label{The definition of FF MLP}
    \vh^{(0)} &= \vx, \\
    \vh^{(\ell)}
    &= \sigma_{\ell}\!\left(
            \mW^{(\ell)} \vh^{(\ell-1)} + \vb^{(\ell)}
        \right),
        \qquad \ell \in[L], \\
    \mathrm{MLP}_{\mTheta}(\vx) &= \vh^{(L)},
\end{align}
where $\mW^{(\ell)}$ and $\vb^{(\ell)}$ denote the weight matrix and bias vector of layer $\ell$, respectively, and $\sigma_\ell$ is the elementwise activation function applied at layer $\ell$ (e.g., ReLU, tanh, or sigmoid).

\item{\textbf{Transformer layer.}} A transformer layer consists of a multi-head self-attention module followed by a subsequent two-layer feed-forward network. Both components adopt a residual formulation to stabilize optimization and preserve information flow.

\begin{itemize}
\item{\textbf{Embedding (input projection) layer}.}
Each column of the input matrix
$\mX \in \R^{(n+1)\times d^n r^{\textup{LR}}}$
represents a single index tuple and lies in $\R^{n+1}$. All subsequent Transformer layers operate in a shared hidden space of dimension $N$, where $N$ denotes the model (embedding) dimension. To map the raw inputs into this space, we first apply a learnable linear embedding followed by a nonlinearity:
\begin{equation}
\label{eq:embedding}
    \hat{\mX} = \sigma_{\textup{emb}}(\mW_{\textup{emb}} \mX)
    \in \R^{N \times d^n r^{\textup{LR}}},
    \qquad \mW_{\textup{emb}} \in \R^{N \times (n+1)},
\end{equation}
where $\sigma_{\textup{emb}}$ is applied elementwise and is chosen as the ReLU activation. A fixed sinusoidal positional encoding
$\mP \in \R^{N \times d^n r^{\textup{LR}}}$ is then added to inject positional information:
$\hat{\mX} \leftarrow \hat{\mX} + \mP$.
All subsequent layers operate on the embedded representation $\hat{\mX}$.

\item{\textbf{Attention layer}.}
To control the computational cost of self-attention, we adopt a \emph{block-wise self-attention} scheme, in which self-attention is computed independently within local, non-overlapping blocks of columns of $\hat{\mX}$, rather than over the full sequence. Specifically, $\hat{\mX}$ is partitioned along its columns into contiguous blocks of size $D_{\textup{token}} \le d^n r^{\textup{LR}}$ (the \emph{attention window}), with consecutive blocks offset by a stride $ D_{\textup{token}}$, so that each column of $\hat{\mX}$ belongs to exactly one block. Let $\hat{\mX}_b \in \R^{N\times D_{\textup{token}}}$ denote one such block. Given $\hat{\mX}_b$, we define the query, key, and value projections by $\mQ, \mK, \mV \in \R^{N \times N}$, where $N$ is the hidden dimension of the transformer.
The single-head self-attention operator is defined as
\begin{equation}
\label{eq:single_head_attn}
    \textup{attn}(\hat{\mX}_b; \mQ,\mK,\mV)
    = \mV \hat{\mX}_b
    \, \sigma_{\textup{attn}}\!\left(
    \frac{(\mQ \hat{\mX}_b)^\top (\mK \hat{\mX}_b)}{\sqrt{N}}
    \right) \in \R^{N \times D_{\textup{token}}},
\end{equation}
where $\sigma_{\textup{attn}}$ denotes the softmax function applied row-wise.
For a multi-head self-attention layer with parameters
$\mTheta_{\textup{attn}} = \{ \mQ_m, \mK_m, \mV_m \}_{m\in[M_{\textup{attn}}]}$,
the outputs of all heads are aggregated with a residual connection, applied block-wise:
\begin{equation}
\label{eq:multihead_attn}
    \textup{Attn}_{\mTheta_{\textup{attn}}}(\hat{\mX}_b)
    = \hat{\mX}_b + \sum_{m=1}^{M_{\textup{attn}}}
    \textup{attn}(\hat{\mX}_b; \mQ_m, \mK_m, \mV_m)
    \in \R^{N \times D_{\textup{token}}}.
\end{equation}
The per-block outputs $\textup{Attn}_{\mTheta_{\textup{attn}}}(\hat{\mX}_b)$ are concatenated along the column dimension, in the same order as the original blocks, to recover a representation in $\R^{N\times d^n r^{\textup{LR}}}$. Setting $D_{\textup{token}} = d^n r^{\textup{LR}}$ recovers standard full self-attention over the entire sequence. The residual connection is well-defined since both terms share the same shape.

\item{\textbf{Feed-forward layer}.}
The feed-forward sublayer is a two-layer position-wise MLP, applied independently within each attention block. Given parameters
$\mW^{(1)} \in \R^{N_{\textup{ff}} \times N}$ and
$\mW^{(2)} \in \R^{N \times N_{\textup{ff}}}$,
where $N_{\textup{ff}}$ denotes the hidden width of the feed-forward network (set to $N_{\textup{ff}} = 4N$ in our implementation),
and activation function $\sigma_{\textup{ff}}$, the feed-forward layer applied to block $b$ is given by
\begin{equation}
\label{eq:ff_layer}
    \textup{FF}_{\mTheta_{\textup{ff}}}\big(\textup{Attn}_{\mTheta_{\textup{attn}}}(\hat{\mX}_b)\big)
    = \textup{Attn}_{\mTheta_{\textup{attn}}}(\hat{\mX}_b) + \mW^{(2)} \,
    \sigma_{\textup{ff}}\big(\mW^{(1)} \textup{Attn}_{\mTheta_{\textup{attn}}}(\hat{\mX}_b)\big)
    \in \R^{N \times D_{\textup{token}}},
\end{equation}
where $\mTheta_{\textup{ff}} = \{ \mW^{(1)}, \mW^{(2)} \}$. The outputs $\textup{FF}_{\mTheta_{\textup{ff}}}(\textup{Attn}_{\mTheta_{\textup{attn}}}(\hat{\mX}_b))$ from all blocks are concatenated along the column dimension, in the original block order, to produce a representation in $\R^{N\times d^n r^{\textup{LR}}}$, which serves as the input to the next transformer layer.
\end{itemize}
\end{itemize}

\section{Proof of \Cref{tab:recovery_error_structured_POVM}}
\label{proof of recovery error bound}
\begin{proof}
Let $\setX$ denote a class of quantum states induced by a factorized parametrization, i.e.,
$\setX = \{\vrho \in \mathbb{C}^{d^n \times d^n} : \vrho = \mF \mF^\dagger,\ \mF \in \setF\}$. Since every $\vrho \in \setX$ can be decomposed as $\vrho = \mF \mF^\dagger$ for some $\mF \in \setF$, it is sufficient to analyze the problem directly in terms of $\vrho$. This naturally leads to a linear inverse problem formulation in terms of the form $\langle \mA_{q,k}, \vrho \rangle$. For notational clarity, we collect the probabilities for each POVM $\{ \<\mA_{q,k}, \vrho\>\}$, into a single linear map $\calA_q:  \C^{d^n\times  d^n} \rightarrow \R^K$ of form
\begin{eqnarray}
\label{The defi of POVM element in q-th POVM}
 \calA_q(\vrho) =  \begin{bmatrix}
          \< \mA_{q,1}, \vrho  \> \\
          \vdots \\
          \< \mA_{q,K}, \vrho  \>
        \end{bmatrix}.
\end{eqnarray}
By concatenating the operators $\{\calA_q\}_{q\in[Q]}$ into a single linear map $\calA : \C^{d^n \times d^n} \to \R^{KQ}$, we obtain $KQ$ population measurements as
\begin{eqnarray}
\label{The defi of population measurement in Q cases (K measurements)}
 \vp = \calA(\vrho)  = \begin{bmatrix}
          {\bm p}_{1} \\
          \vdots \\
          {\bm p}_{Q}
        \end{bmatrix}  = \begin{bmatrix}
          \calA_1(\vrho) \\
          \vdots \\
          \calA_Q(\vrho)
        \end{bmatrix}.
\end{eqnarray}
Analogously, repeating each POVM $M$ times yields the stacked empirical frequency vector
\begin{equation}
\wh\vp = \begin{bmatrix}
    \wh \vp_1 \\ \vdots \\ \wh \vp_Q
\end{bmatrix},
\label{eq:map-M-POVM1}\end{equation}
and hence the following optimization formulation:
\begin{eqnarray}
    \label{The loss function in QST for general measurements appendix}
    \wh{\vrho} = \argmin_{\vrho\in\setX}  \frac{1}{2Q}\|\calA(\vrho) - {\widehat\vp}\|_2^2.
\end{eqnarray}
We introduce the noise vector $\veta = \wh{\vp} - \vp$.
Since $\wh{\vrho}$ is the global minimizer, the following inequality holds:
\begin{eqnarray}
    \label{whrho and rho star relationship}
    0 & \leq & \|\calA(\vrho^\star) - \wh{\vp} \|_2^2  - \|\calA(\wh{\vrho}) - \wh{\vp} \|_2^2\nonumber\\
    & = &\|\calA(\vrho^\star)-\calA(\vrho^\star) - \veta\|_2^2 - \|\calA(\wh{\vrho})-\calA(\vrho^\star) - \veta\|_2^2\nonumber\\
    & = & 2\<\calA(\vrho^\star)+\veta, \calA(\wh{\vrho} - \vrho^\star)  \> + \|\calA(\vrho^\star)\|_2^2 - \|\calA(\wh{\vrho})\|_2^2\nonumber\\
    & = & 2\<  \veta, \calA(\wh{\vrho} - \vrho^\star) \> - \|\calA(\wh{\vrho} - \vrho^\star)\|_2^2,
\end{eqnarray}
which further implies that
\begin{eqnarray}
    \label{whrho and rho^star relationship_1}
    \|\calA(\wh{\vrho} - \vrho^\star)\|_2^2 \leq 2\<  \veta, \calA(\wh{\vrho} - \vrho^\star) \>.
\end{eqnarray}
Furthermore, invoking the condition in Eq.~\eqref{requirements of second order information}, we obtain the lower bound
\begin{eqnarray}
    \label{lower bound of the left term}
    \|\calA(\wh{\vrho} - \vrho^\star)\|_2^2 \geq    C_1(Q,K)\|\wh\vrho - \vrho^\star\|_F^2.
\end{eqnarray}

To bound the right hand side of Eq.~\eqref{whrho and rho^star relationship_1}, we consider several classes of quantum states that arise from different structured parametrizations. Each class $\setX_{\textup{type}}$ denotes a set of quantum states induced by an underlying structured parametrization, namely $\setX_\textup{type}=\{\mF\mF^\dagger : \mF\in\setF_\textup{type}\}$, where $\setF_\textup{type}$ is defined in Eqs.~\eqref{simplex definition of physical quantum states}--%
\eqref{the set of LR-MPO states}. In addition to the classes obtained by the sets $\setF_\textup{type}$ defined in the main text, we introduce $\setX_{\textup{MPO}} = \{ \vrho\in\C^{d^n\times d^n}:\ \vrho = \vrho^\dagger, \trace(\vrho) = 1, \vrho(i_1 \cdots i_\nqbit, j_1\cdots j_\nqbit) = \mX_1^{i_1,j_1}  \cdots \mX_\nqbit^{i_\nqbit,j_\nqbit}, \mX_\ell^{i_\ell,j_\ell}\in\C^{r_{\ell-1}^{\textup{MPO}}\times r_\ell^{\textup{MPO}}}, \ell\in[n], r_0^{\textup{MPO}}=r_n^{\textup{MPO}}=1  \}$, which provides a structured setting that naturally enables the analysis of LR-MPO models.

\begin{itemize}
\item First focusing on $\setX_{\textup{simplex}}$, $\setX_{\textup{LR}}$ and $\setX_{\textup{MPO}}$, the cross term $\<\veta, \calA(\wh{\vrho} - \vrho^\star)\>$ can be reformulated as
    \begin{eqnarray}
    \label{upper bound of the right term}
    \<\veta, \calA(\wh{\vrho} - \vrho^\star)\> \leq \|\wh{\vrho} - \vrho^\star\|_F \cdot \max_{\vrho\in\ol\setX}\< \veta, \calA(\vrho) \>,
    \end{eqnarray}
    where the auxiliary set $\ol\setX$ is defined according to the structure of the state under consideration:
    \begin{itemize}
    \item $\ol\setX_{\textup{simplex}} = \{ \vrho\in\C^{d^n\times d^n}: \|\vrho\|_F=1, \trace(\vrho) = 0  \}$;
    \item $\ol\setX_{\textup{LR}}=  \{ \vrho\in\C^{d^n\times d^n}: \|\vrho\|_F=1, \trace(\vrho) = 0, \rank(\vrho) = 2r^{\textup{LR}}  \}$;
    \item $\ol\setX_{\textup{MPO}} =\{\vrho\in\C^{d^n\times d^n}:  \|\vrho\|_F=1, \trace(\vrho) =0,  \vrho(i_1 \cdots i_\nqbit, j_1\cdots j_\nqbit) = \mX_1^{i_1,j_1}  \cdots \mX_\nqbit^{i_\nqbit,j_\nqbit}, \mX_\ell^{i_\ell,j_\ell}\in\C^{2r_{\ell-1}^{\textup{MPO}}\times 2r_\ell^{\textup{MPO}}}, \ell\in[n]\}$.
    \end{itemize}
    Next, we apply the covering argument to bound Eq.~\eqref{upper bound of the right term}. For each structural constraint, we construct an $\epsilon$-net and analyze the corresponding covering number:
    \begin{itemize}
    \item for any fixed value of $\vrho^{(p)}\in \wt\setX_{\textup{simplex}}\subset \ol\setX_{\textup{simplex}}$, construct a $\epsilon$-net $\{\vrho^{(1)}, \dots, \vrho^{(N_{\textup{simplex}})}  \}$  such that $$\sup_{\vrho\in\ol\setX_\textup{simplex}}\min_{p\in[N_{\textup{simplex}}]}\|\vrho  - \vrho^{(p)}\|_F\leq \epsilon$$
    with covering number $N_{\textup{smplex}}\leq (\frac{9}{\epsilon})^{d^{2n}}$~\cite{candes2011tight};
    \item for any fixed value of $\vrho^{(p)}\in \wt\setX_{\textup{LR}}\subset \ol\setX_{\textup{LR}}$, construct a $\epsilon$-net $\{\vrho^{(1)}, \dots, \vrho^{(N_{\textup{LR}})}  \}$  such that $$\sup_{\vrho\in\ol\setX_{\textup{LR}}}\min_{p\in [N_{\textup{LR}}]}\|\vrho  - \vrho^{(p)}\|_F\leq \epsilon $$ with covering number $N_{\textup{LR}}\leq (\frac{9}{\epsilon})^{(2d^{n+2}+4)r^{\textup{LR}}}$~\cite{candes2011tight};
    \item for any fixed value of $\vrho^{(p)}\in \wt\setX_{\textup{MPO}}\subset \ol\setX_{\textup{MPO}}$, construct a $\epsilon$-net $\{\vrho^{(1)}, \dots, \vrho^{(N_{\textup{MPO}})}  \}$  such that $$\sup_{\vrho\in\ol\setX_{\textup{MPO}}}\min_{p\in[ N_{\textup{MPO}}]}\|\vrho  - \vrho^{(p)}\|_F\leq \epsilon $$ with covering number $N_{\textup{MPO}}\leq (\frac{4 + \epsilon}{\epsilon})^{\sum_{\ell=1}^n 4d^2r_{\ell-1}^{\textup{MPO}} r_\ell^{\textup{MPO}}  }$~\cite{qin2024quantum}.
    \end{itemize}
    Then, we can further derive
    \begin{eqnarray}
    \label{upper bound of the right term1}
    \max_{\vrho\in\ol\setX}\< \veta, \calA(\vrho) \> &\!\!\!\! = \!\!\!\!& \max_{\vrho\in\ol\setX}\sum_{q=1}^{Q}\sum_{k=1}^{K}  \< \veta_{q,k}\mA_{q,k}, \vrho - \vrho^{(p)} \> + \sum_{q=1}^{Q}\sum_{k=1}^{K}  \< \veta_{q,k}\mA_{q,k}, \vrho^{(p)} \>\nonumber\\
    &\!\!\!\! \leq \!\!\!\!&\max_{\vrho\in\ol\setX}\sum_{q=1}^{Q}\sum_{k=1}^{K} \epsilon  \left\langle \veta_{q,k}\mA_{q,k}, \frac{\vrho - \vrho^{(p)}}{\|\vrho - \vrho^{(p)}\|_F}\right\rangle + \sum_{q=1}^{Q}\sum_{k=1}^{K}  \< \veta_{q,k}\mA_{q,k}, \vrho^{(p)} \>\nonumber\\
    &\!\!\!\! \leq \!\!\!\!& \begin{dcases}
 \max_{\vrho\in\ol\setX_{\textup{simplex}}}\sum_{q=1}^{Q}\sum_{k=1}^{K} \epsilon  \< \veta_{q,k}\mA_{q,k}, \vrho \> + \sum_{q=1}^{Q}\sum_{k=1}^{K}  \< \veta_{q,k}\mA_{q,k}, \vrho^{(p)} \>, &  \setX_{\textup{simplex}} \\
 \max_{\vrho\in\ol\setX_{\textup{LR}}}\sum_{q=1}^{Q}\sum_{k=1}^{K} 2\epsilon  \< \veta_{q,k}\mA_{q,k}, \vrho \> + \sum_{q=1}^{Q}\sum_{k=1}^{K}  \< \veta_{q,k}\mA_{q,k}, \vrho^{(p)} \>, &  \setX_{\textup{LR}}  \\
 \max_{\vrho\in\ol\setX_{\textup{MPO}}}\sum_{q=1}^{Q}\sum_{k=1}^{K} n\epsilon  \< \veta_{q,k}\mA_{q,k}, \vrho \> + \sum_{q=1}^{Q}\sum_{k=1}^{K}  \< \veta_{q,k}\mA_{q,k}, \vrho^{(p)} \>, &  \setX_{\textup{MPO}}  \\
\end{dcases}
    \end{eqnarray}
    where the second inequality follows \cite[Eq. (90)]{qin2024quantum}. Finally, by choosing
    \begin{eqnarray}
    \label{upper bound of the right term pre}
    \epsilon = \begin{cases}
 \frac{1}{2}, &  \setX_{\textup{simplex}} \\
 \frac{1}{4}, &  \setX_{\textup{LR}}  \\
 \frac{1}{2n}, &  \setX_{\textup{MPO}}  \\
\end{cases},
    \end{eqnarray}
    we obtain the desired bounds for each structured set:
    \begin{eqnarray}
    \label{upper bound of the right term2}
    \max_{\vrho\in\ol\setX}\sum_{q=1}^{Q}\sum_{k=1}^{K}  \< \veta_{q,k}\mA_{q,k}, \vrho  \>\leq 2\sum_{q=1}^{Q}\sum_{k=1}^{K}  \< \veta_{q,k}\mA_{q,k}, \vrho^{(p)} \>.
    \end{eqnarray}
    We consider any fixed value of $\vrho^{(p)}$ and apply \cite[Lemma 14]{qin2024quantum} to establish a concentration inequality for the expression $\sum_{q=1}^{Q}\sum_{k=1}^{K}  \< \veta_{q,k}\mA_{q,k}, \vrho^{(p)} \>$. Using Eq.~\eqref{requirements of third order information}, we have
    \begin{eqnarray}
    \label{concentration inequality of fixed quantum state}
    \P{\sum_{q=1}^{Q}\sum_{k=1}^{K}  \< \veta_{q,k}\mA_{q,k}, \vrho^{(p)} \>> t} \leq 2e^{-\frac{Mt^2}{C C_2(Q,K)}},
    \end{eqnarray}
    where $C$ denotes a universal constant. Combining Eq.~\eqref{upper bound of the right term2} with Eq.~\eqref{concentration inequality of fixed quantum state} gives
    \begin{eqnarray}
    \label{concentration inequality of fixed quantum state1}
    \P{\max_{\vrho\in\ol\setX}\sum_{q=1}^{Q}\sum_{k=1}^{K}  \< \veta_{q,k}\mA_{q,k}, \vrho  \> > t}
    &\!\!\!\!\leq \!\!\!\!& \P{\sum_{q=1}^{Q}\sum_{k=1}^{K}  \< \veta_{q,k}\mA_{q,k}, \vrho^{(p)} \> > \frac{t}{2}} \nonumber\\
    &\!\!\!\!\leq \!\!\!\!& \begin{cases}
    2\left(\frac{9}{\epsilon}\right)^{d^{2n}}e^{-\frac{Mt^2}{C C_2(Q,K)}}, &  \setX_{\textup{simplex}} \\
    2\left(\frac{9}{\epsilon}\right)^{(2d^{n+2}+4)r^{\textup{LR}}}e^{-\frac{Mt^2}{C C_2(Q,K)}}, &  \setX_{\textup{LR}}  \\
    2\left(\frac{4 + \epsilon}{\epsilon}\right)^{\sum_{\ell=1}^n 4d^2r_{\ell-1}^{\textup{MPO}} r_\ell^{\textup{MPO}}  }e^{-\frac{Mt^2}{C C_2(Q,K)}}, &  \setX_{\textup{MPO}}  \\
    \end{cases}\nonumber\\
    &\!\!\!\!\leq \!\!\!\!& \begin{cases}
    e^{-\frac{Mt^2}{C C_2(Q,K)}+O(d^{2n})}, &  \setX_{\textup{simplex}} \\
    e^{-\frac{Mt^2}{C C_2(Q,K)} + O(d^nr^{\textup{LR}})}, &  \setX_{\textup{LR}}  \\
    e^{-\frac{Mt^2}{C C_2(Q,K)} + O(\sum_{\ell=1}^n d^2r_{\ell-1}^{\textup{MPO}} r_\ell^{\textup{MPO}} \log n)}, &  \setX_{\textup{MPO}}  \\
    \end{cases}.
    \end{eqnarray}
    Through taking
    $$\hat t = \begin{cases}
    O(\sqrt{C_2(Q,K)d^{2n}/M}), &  \setX_{\textup{simplex}} \\
    O(\sqrt{C_2(Q,K)d^{n}r^{\textup{LR}}/M}), &  \setX_{\textup{LR}}  \\
    O(\sqrt{C_2(Q,K)\sum_{\ell=1}^n d^2r_{\ell-1}^{\textup{MPO}} r_\ell^{\textup{MPO}} \log n/M}), &  \setX_{\textup{MPO}}  \\
    \end{cases},$$
    we further obtain
    \begin{eqnarray}
    \label{concentration inequality of fixed quantum state2}
    \P{\max_{\vrho\in\ol\setX}\sum_{q=1}^{Q}\sum_{k=1}^{K}  \< \veta_{q,k}\mA_{q,k}, \vrho  \> \leq \hat t} \geq   \begin{cases}
    1 - e^{-\Omega(d^{2n})}, &  \setX_{\textup{simplex}} \\
    1 - e^{-\Omega(d^nr^{\textup{LR}})}, &  \setX_{\textup{LR}}  \\
    1 - e^{-\Omega(\sum_{\ell=1}^n d^2r_{\ell-1}^{\textup{MPO}} r_\ell^{\textup{MPO}} \log n)}, &  \setX_{\textup{MPO}}  \\
    \end{cases}
    \end{eqnarray}
    Hence, we have
    \begin{eqnarray}
    \label{upper bound of the right term final}
    \<\veta, \calA(\wh{\vrho} - \vrho^\star)\>\leq   \begin{cases}
    O\bigg(\sqrt{\frac{C_2(Q,K)d^{2n}}{M}}\|\wh{\vrho} - \vrho^\star\|_F\bigg), &  \setX_{\textup{simplex}} \\
    O\bigg(\sqrt{\frac{C_2(Q,K)d^{n}r^{\textup{LR}}}{M}}\|\wh{\vrho} - \vrho^\star\|_F\bigg), &  \setX_{\textup{LR}}  \\
    O\bigg(\sqrt{\frac{C_2(Q,K)\sum_{\ell=1}^n d^2r_{\ell-1}^{\textup{MPO}} r_\ell^{\textup{MPO}} \log n}{M}}\|\wh{\vrho} - \vrho^\star\|_F\bigg), &  \setX_{\textup{MPO}}  \\
    \end{cases}.
    \end{eqnarray}
Specifically, note that $\setX_{\textup{Cholesky}}\subset \setX_{\textup{simplex}}$. Therefore, for $\wh{\vrho}, \vrho^\star \in \setX_{\textup{Cholesky}}$, we have
\begin{eqnarray}
    \label{upper bound of the right term final Cholesky}
    \<\veta, \calA(\wh{\vrho} - \vrho^\star)\>\leq O\bigg(\sqrt{\frac{C_2(Q,K)d^{2n}}{M}}\|\wh{\vrho} - \vrho^\star\|_F\bigg), & \setX_{\textup{Cholesky}}.
\end{eqnarray}

\item Next, we turn to the set $\setX_{\textup{LR-MPO}} $.  Since any $\vrho = \mF\mF^\dagger$ can be regarded as an MPO state with MPO ranks $[(r_1^{\textup{LR-MPO}})^2, \dots, (r_{n-1}^{\textup{LR-MPO}})^2 ]$, the preceding analysis applies directly. In particular, with probability at least $1 - e^{-\Omega(\sum_{\ell=1}^{n} d^2 (r_{\ell-1}^{\textup{LR-MPO}})^2 (r_\ell^{\textup{LR-MPO}})^2 \log n)}$, we have
    \begin{eqnarray}
    \label{upper bound of the right term final LR-MPO}
    \<\veta, \calA(\wh{\vrho} - \vrho^\star)\>\leq O\left(\sqrt{\frac{C_2(Q,K)\sum_{\ell=1}^n d^2 (r_{\ell-1}^{\textup{LR-MPO}})^2 (r_\ell^{\textup{LR-MPO}})^2 \log n}{M}}\|\wh{\vrho} - \vrho^\star\|_F\right), \quad\setX_\textup{LR-MPO}.
    \end{eqnarray}

\item Finally, we turn our attention to the set $\setX_{\textup{MPS}} $. Although MPS can be viewed as a special case of MPO, we aim to derive a tighter bound rather than directly specializing the MPO result. To this end, we establish a connection between $\<\veta, \calA(\wh{\vrho} - \vrho^\star)\>$ and $\|\wh\vf - \vf^\star\|_2$. Specifically, the cross term $\<\veta, \calA(\wh{\vrho} - \vrho^\star)\>$ can be reformulated as
    \begin{eqnarray}
    \label{upper bound of the right term MPS}
    \<\veta, \calA(\wh{\vrho} - \vrho^\star)\> \leq \|\wh\vf - \vf^\star\|_2 \cdot \max_{\vf\in\ol\setX_{\textup{MPS}}} (\< \veta, \calA(\wh\vf\vf^\dagger) \> + \< \veta, \calA(\vf{\vf^\star}^\dagger) \>),
    \end{eqnarray}
    where the set $\ol\setX_{\textup{MPS}}$ is defined as $\ol \setX_{\textup{MPS}}=  \{ \vf\in\C^{d^n\times 1}:    \|\vf\|_2=1,  \vf(i_1 \cdots i_\nqbit) = \mX_1^{i_1}  \cdots \mX_\nqbit^{i_\nqbit},\mX_\ell^{i_\ell}\in\C^{2r_{\ell-1}^{\textup{MPS}}\times 2r_\ell^{\textup{MPS}}},\ell=1,\dots,n \}$. Notice that $\|\wh\vf\vf^\dagger\|_F\leq 1$ and $\|\vf{\vf^\star}^\dagger\|_F\leq 1$. Following the analysis for MPO states, we obtain
    \begin{eqnarray}
    \label{upper bound of the right term final MPS}
    \<\veta, \calA(\wh{\vrho} - \vrho^\star)\>\leq  O\left(\sqrt{\frac{C_2(Q,K)\sum_{\ell=1}^n dr_{\ell-1}^{\textup{MPS}} r_\ell^{\textup{MPS}} \log n}{M}}\|\wh\vf - \vf^\star\|_F\right),
    \end{eqnarray}
    with probability $1 - e^{-\Omega(\sum_{\ell=1}^n d r_{\ell-1}^{\textup{MPS}} r_\ell^{\textup{MPS}} \log n)}$. Moreover, applying the bound $\|\wh\vf - \vf^\star\|_F\leq \frac{1}{2(\sqrt{2}-1)}\|\wh{\vrho} - \vrho^\star\|_F$ \cite[Lemma 41]{ge2017no}, we further deduce
      \begin{eqnarray}
    \label{upper bound of the right term final MPS1}
    \<\veta, \calA(\wh{\vrho} - \vrho^\star)\>\leq  O\left(\sqrt{\frac{C_2(Q,K)\sum_{\ell=1}^n dr_{\ell-1}^{\textup{MPS}} r_\ell^{\textup{MPS}} \log n}{M}}\|\wh{\vrho} - \vrho^\star\|_F\right).
    \end{eqnarray}

\end{itemize}

By combining Eq.~\eqref{lower bound of the left term} with the preceding analysis, we arrive at the following bound
    \begin{eqnarray}
    \label{recovery error F norm final}
    \|\wh{\vrho} - \vrho^\star\|_F\leq   \begin{cases}
    O\bigg(\sqrt{\frac{C_2(Q,K)d^{2n}}{C_1^2(Q,K)M}} \bigg), &  \setX_{\textup{simplex}},\setX_{\textup{Cholesky}} \\
    O\bigg(\sqrt{\frac{C_2(Q,K)d^{n}r^{\textup{LR}}}{C_1^2(Q,K)M}} \bigg), &  \setX_{\textup{LR}} \\
    O\bigg(\sqrt{\frac{C_2(Q,K)\sum_{\ell=1}^n dr_{\ell-1}^{\textup{MPS}} r_\ell^{\textup{MPS}} \log n}{C_1^2(Q,K)M}} \bigg), &  \setX_{\textup{MPS}}  \\
    O\bigg(\sqrt{\frac{C_2(Q,K)\sum_{\ell=1}^n d^2 (r_{\ell-1}^{\textup{LR-MPO}})^2 (r_\ell^{\textup{LR-MPO}})^2 \log n}{C_1^2(Q,K)M}} \bigg), &  \setX_{\textup{LR-MPO}}  \\
    \end{cases}.
    \end{eqnarray}

\end{proof}


\end{document}